%\pdfoutput=1  %for arXiv rendering of xymatrix to come out right

%\documentclass[11pt, openany]{book}
\documentclass[12pt]{article}
\usepackage[margin=1in,left=1in]{geometry}

\usepackage{amsthm}
\usepackage{amssymb}
\usepackage{amsmath}
\usepackage{mathtools}\usepackage{adjustbox}
\usepackage{xcolor}
\usepackage{stmaryrd}    % for \sslash
\usepackage[all]{xy}
\usepackage{rotating}
\usepackage{xfrac}
\usepackage{enumitem}
%%%%%%%%%%%%%%%%%%%%%%%%%%%%%%%%
% spacing between list items
\setlist{nosep} % removes all spacing
%%%%%%%%%%%%%%%%%%%%%%%%%%%%%%%%

\usepackage{hhline}

\usepackage{tikz}
\usetikzlibrary{
  backgrounds,
  decorations.pathmorphing,
  decorations.markings,
  decorations.text
}
\usetikzlibrary{cd}

\usepackage{mathrsfs} %mathscr
\DeclareMathAlphabet{\mathpzc}{OT1}{pzc}{m}{it} %for math script

\usepackage{hyperref}
\hypersetup{
  colorlinks = true,
  allcolors=.
}

%%%%%%%%%%%%%%%%%%%%%%%%%%%%%%%%%%%%%%%%%
% adjust spacing between bibitems       %
%%%%%%%%%%%%%%%%%%%%%%%%%%%%%%%%%%%%%%%%%
\let\PLAINthebibliography\thebibliography
\renewcommand\thebibliography[1]{
  \PLAINthebibliography{#1}
  \setlength{\parskip}{0.5pt}
  \setlength{\itemsep}{0.5pt plus .3ex}
}

\usepackage[titletoc]{appendix}

\usepackage[safe]{tipa} % for \esh

\definecolor{nyulight}{RGB}{107, 33, 158}

\definecolor{darkblue}{rgb}{0.05,0.25,0.65}
\definecolor{darkgreen}{RGB}{20,140,10}
\definecolor{lightgray}{rgb}{0.9,0.9,0.9}
\definecolor{darkorange}{RGB}{200,100,5}
\definecolor{darkyellow}{rgb}{.91,.91,0}

\usepackage{multirow}

\usepackage{enumerate} %Roman enumerate

\newcommand{\defneq}{\equiv}

\newcommand{\closed}{\mathrm{clsd}}

\newcommand{\Differential}{\mathrm{d}}
\newcommand{\differential}{\Differential}

\setcounter{tocdepth}{1}

%\usepackage{diagbox}  %diagonal in table

%%% loading the following package
%%% made the ordinary math-boldface disappear
%%% due to "too many math alphabets"
%%% so I have commented it out for the moment,
%%% maybe we can work around this somehow...
%%%
%\usepackage{amssymb,fge}
%\newcommand{\mysetminus}{\mathbin{\fgebackslash}}

   %for overset and underset at the same time

%%%%%%%%%%%%%%%%%

\usepackage[new]{old-arrows}   %For \longhookrightarrow

%%%%%%%%%%%%%%%%%
%%% COLOURS

%\definecolor{amber}{RGB}{255,126,0}
%\definecolor{celblue}{RGB}{69,158,214}

%\hypersetup{citecolor=orange}
%\hypersetup{linkcolor=purple}

%%%%%%%%%%%%%%%%%
%%% CUSTOM XY
\newdir{> }{{}*!/10pt/@{>}}

\usepackage{amssymb}

%%%%%%%%%%%%%%%%%

%FOR HOMOTOPY QUOTIENT
%\newcommand{\dslash}{\hspace{-1mm}\sslash \hspace{-1mm}}

%For chi at the same vertical level
\DeclareRobustCommand{\rchi}{{\mathpalette\irchi\relax}}
\newcommand{\irchi}[2]{\raisebox{\depth}{$#1\chi$}} % inner command, used by \rchi

% \tensor and \multiscript
\makeatletter
\newif\if@sup
\newtoks\@sups
\def\append@sup#1{\edef\act{\noexpand\@sups={\the\@sups #1}}\act}%
\def\reset@sup{\@supfalse\@sups={}}%
\def\mk@scripts#1#2{\if #2/ \if@sup ^{\the\@sups}\fi \else%
  \ifx #1_ \if@sup ^{\the\@sups}\reset@sup \fi {}_{#2}%
  \else \append@sup#2 \@suptrue \fi%
  \expandafter\mk@scripts\fi}
\def\tensor#1#2{\reset@sup#1\mk@scripts#2_/}
\def\multiscripts#1#2#3{\reset@sup{}\mk@scripts#1_/#2%
  \reset@sup\mk@scripts#3_/}
\makeatother

% \slash
\makeatletter
\newbox\slashbox \setbox\slashbox=\hbox{$/$}
\def\itex@pslash#1{\setbox\@tempboxa=\hbox{$#1$}
  \@tempdima=0.5\wd\slashbox \advance\@tempdima 0.5\wd\@tempboxa
  \copy\slashbox \kern-\@tempdima \box\@tempboxa}
\def\slash{\protect\itex@pslash}
\makeatother

% math-mode versions of \rlap, etc
% from Alexander Perlis, "A complement to \smash, \llap, and lap"
%   http://math.arizona.edu/~aprl/publications/mathclap/
\def\clap#1{\hbox to 0pt{\hss#1\hss}}
\def\mathllap{\mathpalette\mathllapinternal}
\def\mathrlap{\mathpalette\mathrlapinternal}
\def\mathclap{\mathpalette\mathclapinternal}
\def\mathllapinternal#1#2{\llap{$\mathsurround=0pt#1{#2}$}}
\def\mathrlapinternal#1#2{\rlap{$\mathsurround=0pt#1{#2}$}}
\def\mathclapinternal#1#2{\clap{$\mathsurround=0pt#1{#2}$}}

% % Renames \sqrt as \oldsqrt and redefine root to result in \sqrt[#1]{#2}
% \let\oldroot\root
% \def\root#1#2{\oldroot #1 \of{#2}}
% \renewcommand{\sqrt}[2][]{\oldroot #1 \of{#2}}

% Manually declare the txfonts symbolsC font
\DeclareSymbolFont{symbolsC}{U}{txsyc}{m}{n}
\SetSymbolFont{symbolsC}{bold}{U}{txsyc}{bx}{n}
\DeclareFontSubstitution{U}{txsyc}{m}{n}

% % Manually declare the stmaryrd font
% \DeclareSymbolFont{stmry}{U}{stmry}{m}{n}
% \SetSymbolFont{stmry}{bold}{U}{stmry}{b}{n}

% % Manually declare the MnSymbolE font
% \DeclareFontFamily{OMX}{MnSymbolE}{}
% \DeclareSymbolFont{mnomx}{OMX}{MnSymbolE}{m}{n}
% \SetSymbolFont{mnomx}{bold}{OMX}{MnSymbolE}{b}{n}
% \DeclareFontShape{OMX}{MnSymbolE}{m}{n}{
%     <-6>  MnSymbolE5
%    <6-7>  MnSymbolE6
%    <7-8>  MnSymbolE7
%    <8-9>  MnSymbolE8
%    <9-10> MnSymbolE9
%   <10-12> MnSymbolE10
%   <12->   MnSymbolE12}{}

\usepackage{cleveref}

\crefformat{section}{\S#2#1#3} % see manual of cleveref, section 8.2.1
\crefformat{subsection}{\S#2#1#3}
\crefformat{subsubsection}{\S#2#1#3}

 %Theorem Environments
\newtheorem{theorem}{Theorem}[section]
\newtheorem{lemma}[theorem]{Lemma}

\theoremstyle{definition}
\newtheorem{definition}[theorem]{Definition}

\newtheorem{remark}[theorem]{Remark}

\usepackage{amsfonts}
\usepackage{colortbl}

\renewcommand{\emph}{\textit}

\begin{document}

%%%%%%%%%%%%%%%%%%%%%%%%%%%%%%%%%%%%%%%%%%%%%%
%vertical spacing around displayed equations %
\setlength{\abovedisplayskip}{3pt}
\setlength{\belowdisplayskip}{3pt}
\setlength{\abovedisplayshortskip}{-10pt}
\setlength{\belowdisplayshortskip}{3pt}
%%%%%%%%%%%%%%%%%%%%%%%%%%%%%%%%%%%%%%%%%%%%%%

\title{Flux Quantization on Phase Space}

\author{
  Hisham Sati${}^{\ast \dagger}$
  \;\;
  and
  \;\;
  Urs Schreiber${}^{\ast}$
}

\maketitle

\thispagestyle{empty}

\begin{abstract}
 While it has become widely appreciated that (higher) gauge theories need, besides their variational phase space data, to be equipped with
 ``flux quantization laws'' in generalized differential cohomology, there used to be no general prescription for how to define and construct
 the resulting flux-quantized phase space stacks.

 \vspace{1mm}
 In this short note we observe that all higher Maxwell-type equations
 %--- in vacuum but on curved gravitational backgrounds and possibly with non-linear self-couplings among the higher fluxes ---
 have solution spaces given by flux densities on a Cauchy surface subject to a higher Gau{\ss}
 law and no further constraint: The metric duality-constraint is all absorbed into the evolution equation away from the Cauchy surface.

 \vspace{1mm}
 Moreover, we observe that the higher Gau{\ss} law characterizes the Cauchy data as flat differential forms valued
 in a characteristic $L_\infty$-algebra. Using the recent construction of the non-abelian Chern-Dold character map, this implies that
 compatible flux quantization laws on phase space have classifying spaces whose rational Whitehead $L_\infty$-algebra is this characteristic one.
 The flux-quantized higher phase space stack of the theory is then simply the corresponding (generally non-abelian) differential cohomology
 moduli stack on the Cauchy surface.

 \vspace{1mm}
 We show how this systematic prescription reproduces existing proposals for flux-quantized phase spaces of vacuum Maxwell theory and
 of the chiral boson and its higher siblings, but reveals that there are other choices of (non-abelian) flux quantization laws even
 in these basic cases, further discussed in a companion article \cite{SS23QuantumFluxObservables}.

 \vspace{1mm}
 Moreover, for the case of NS/RR-fields in type II supergravity/string theory, the traditional ``Hypothesis K'' of flux quantization in
 topological K-theory is naturally implied, without the need, on phase space, of the notorious further duality constraint.

 \vspace{1mm}
 Finally, as a genuinely non-abelian example we consider flux-quantization of the C-field in 11d supergravity/M-theory given by
 unstable differential 4-Cohomotopy (``Hypothesis H'') and emphasize again that, implemented on Cauchy data, this qualifies as the full phase
 space without the need for a further duality constraint.

\end{abstract}

\vspace{.3cm}

\vfill

\hrule
\vspace{5pt}

{
\footnotesize
\noindent
\def\arraystretch{1}
\tabcolsep=0pt
\begin{tabular}{ll}
${}^*$\,
&
Mathematics, Division of Science; and
\\
&
Center for Quantum and Topological Systems,
%(CQTS),
\\
&
NYUAD Research Institute,
\\
&
New York University Abu Dhabi, UAE.
\end{tabular}
\hfill
\adjustbox{raise=-15pt}{
\includegraphics[width=3cm]{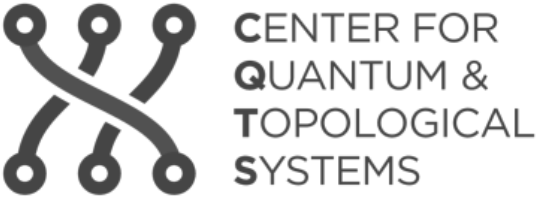}
}
}

\vspace{1mm}
\noindent
{
\footnotesize
${}^\dagger$The Courant Institute for Mathematical Sciences, NYU, NY
}

\vspace{.2cm}

\noindent
\scalebox{.82}{
The authors acknowledge the support by {\it Tamkeen} under the
{\it NYU Abu Dhabi Research Institute grant} {\tt CG008}.
}

\newpage

%%%%%%%%%%%%%%%%%%%%%%%%%%
\section{Introduction}
%%%%%%%%%%%%%%%%%%%%%%%%%%

\noindent
{\bf The need of flux quantization.}
It has become widely appreciated that, besides their equations of motion, (higher) gauge theories are to be subjected to ``flux quantization laws''
(we follow \cite[Introd.]{Char}, survey in \cite{SS24Flux}, see also \cite{Freed00}), even on the classical fields. Here ``quantization'' is in the sense of ``discretization'': For abelian gauge
theories the flux quantization restricts the total fluxes through --- and hence the charges inside  --- closed submanifolds of spacetime to lie
on a lattice (``charge lattice''), hence to be integral multiples of certain indecomposable units. The archetypical case is the quantization
of electromagnetic flux, which Dirac originally discussed (cf. \cref{MaxwellField}, whence one also speaks of ``Dirac charge quantization'')
for hypothetical magnetic monopoles, but which also controls the experimentally observed quantization of magnetic flux through type II
superconductors witnessed by integer numbers of ``Abrikosov vortex strings'' (cf. \cite[\S 2.1]{SS24Flux}).

\medskip

\noindent
{\bf Non-perturbative data in flux quantization.}
In fact, flux quantization is more than just a condition on gauge fields, it is also ``extra structure'' in mathematical jargon, which
physically means: it involves adjoining {\it further fields} or at least further field components such as ``torsion'' components, which
are invisible to traditional perturbation theory. In the basic example of electromagnetism, Dirac flux quantization promotes the gauge
potential 1-form $A$ to a connection on a principal bundle, equivalently to a cocycle in {\it integral differential cohomology} (cf. Rem. \ref{ScopeOfExamples}).
The extra data involved in flux quantization encodes non-perturbative solitonic degrees of freedom (such as Dirac monopoles or Abrikosov vortices).

\medskip
More generally but in the absence of non-linear self-sourcing of the fluxes, their quantization is in abelian Whitehead-generalized differential
cohomology (review in \cite{Bunke12}\cite{Szabo13} \cite{AmabelDebrayHaine21}\cite{Debray24}). If there are non-linear interactions (such as for the
C-field in 11d supergravity, \cref{TheCField}) the flux qantization must be in a non-abelian differential cohomology \cite{Char}\footnote{Where below we give
equation numbers etc. for \cite{Char},  we are referring to the published version,
see \href{https://ncatlab.org/schreiber/show/The+Character+Map\#PublishedVersion}{\tt ncatlab.org/schreiber/show/The+Character+Map\#PublishedVersion},
which differs from the numbering in the arXiv version (otherwise the content is the same).}. Early proposals in the topological case were given in
\cite{Sati05}\cite{Sati06}.

\medskip
\noindent
{\bf The open problem of flux-quantized phase spaces.}
However, this extra nonperturbative data involved in the flux quantization of higher gauge theories is largely outside the scope of traditional recipes
for constructing physical theories, certainly beyond what may be derived by variational calculus from a Lagrangian density. For one, the phase space
of a flux-quantized gauge theory is no longer a manifold, not even a dg-manifold as in BRST-BV formalism, but must be a {\it higher stack}
(a {\it smooth $\infty$-groupoid} \cite[\S 3.1]{SSS13}\cite{SS20Orbi}\cite[pp. 41]{Char}, exposition in \cite{FSS14Stacky}\cite{Schreiber24}).
Therefore, previous discussions of flux-quantized phase spaces have been rare and somewhat ad-hoc.

\medskip
\noindent
{\bf A systematic construction: Higher Gau{\ss} laws in non-abelian cohomology.}
Here we discuss a systematic procedure for arriving at the flux-quantized phase space ($\infty$-stack) of a given higher gauge theory, with an emphasis on
some aspects that have received little to no attention before: The choices involved in the process, and its generalization to ``non-abelian''
gauge theories with non-linear self-interactions, such as exhibited by the C-field in 11d supergravity (\cref{TheCField}).

Our approach is non-perturbative and does not use any Lagrangian density (in particular it applies to ``non-Lagrangian'' field theories for
which a Lagrangian density does not or not naturally exist, such as self-dual higher gauge theories, cf. \cref{ChiralBoson} \& \cref{RRFields});
the input instead is the equations of motion on the flux densities in the
form of ``higher Maxwell equations'' \eqref{TheCovariantEquationsOfMotion}:

\smallskip
{\bf Covariant and canonical phase spaces of pure higher gauge theories.}
To appreciate this, recall that the {\it phase space} of a field theory is, in generality, really the {\it space of on-shell field histories},
as such also known as the {\it covariant phase space}, for emphasis (\cite[p. 314]{Witten86}\cite{CrnkoviWitten87}\cite[\S 17.1]{HenneauxTeitelboim92}; see \cite{Khavkine14}\cite{GiotopoulosSati23} for rigorous discussion).
The traditional discussion of phase spaces by Cauchy data with respect to a given spatial foliation of spacetime  (``canonical'' phase spaces),
and particularly as cotangent bundles, is just one way (applicable in good situations) to identify on-shell field histories with suitable initial value data,
in particular with initial field ``coordinates'' and their ``canonical momenta''.

\smallskip

Now in our case of pure higher gauge theories, we may observe that the on-shell field histories are those (flux-quantized) gauge potentials whose
flux densities (only) are subjected to the corresponding higher Maxwell equations \eqref{TheCovariantEquationsOfMotion}. Conversely, this means that
the phase space of a higher gauge theory must be the partial extension by all compatible gauge potentials of the {\it solution space} of flux
densities satisfying the higher Maxwell equations:
\begin{equation}
  \label{PhaseSpaceOverSolutionSpace}
  \hspace{-2.5cm}
  \begin{tikzcd}[
    row sep=1,
    column sep=40pt
  ]
    &&&[10pt]
    \mathrm{PhaseSpace}
    \ar[
      d,
      "{
        \scalebox{.7}{
          \color{darkgreen}
          \bf
          extract fluxes
        }
      }"{description, pos=.42}
    ]
    \ar[
      r,
      equals,
      "{
        \scalebox{.7}{
        Def. \ref{IntrinsicPhaseSpace}
        }
      }"{swap, yshift=-2pt}
    ]
    &
    \overset{
      \mathclap{
        \raisebox{1pt}{
          \scalebox{.7}{
            \color{darkblue}
            \bf
            Flux densities
          }
        }
      }
    }{
    \mathbf{\Omega}_{\mathrm{dR}}\big(
      X^d
      ;\,
      \mathfrak{l}\mathcal{A}
    \big)_\closed
    }
    \;\;
    \underset{
      \mathclap{
        \underset{
          \mathclap{
            \scalebox{.7}{
              \color{darkorange}
              \bf
              \def\arraystretch{.9}
              \begin{tabular}{c}
                stacky homotopy
                \\
                fiber product
              \end{tabular}
            }
          }
        }{
        \raisebox{-4pt}{
          \scalebox{.7}{$
              L^{\mathbb{R}}\mathcal{A}(X^d)
          $}
        }
        }
        \hspace{-10pt}
      }
    }{\times}
    \quad
    \overset{
      \mathclap{
        \raisebox{1pt}{
          \scalebox{.7}{
            \color{darkblue}
            \bf
            Flux quantization law
          }
        }
      }
    }{
      \mathcal{A}(X^d)
    }
    \ar[
      d,
      shift right=20pt,
      shorten <=-34pt,
      "{
        \scalebox{.7}{
          \color{darkgreen}
          \bf
          projection
        }
      }"{description, pos=-30}
    ]
    \hspace{-40pt}
    \\
    \mathbb{R}^p
    \ar[
      rrr,
      "{ \vec F }"{description},
      "{\hspace{-5mm}
        \scalebox{.7}{
          \color{darkblue}
          \bf
          \def\arraystretch{.9}
          \begin{tabular}{c}
            Flux densities solving
            \\
            higher Maxwell equations
          \end{tabular}
        }
      }"{swap, yshift=-4pt}
    ]
    \ar[
      urrr,
      "{ \vec A }"{description},
      "{
        \scalebox{.65}{
          \color{darkblue}
          \bf
          \def\arraystretch{.9}
          \begin{tabular}{c}
            Flux-quantized
            \\
            gauge potentials
          \end{tabular}
        }
      }"{sloped, yshift=2pt, pos=.4}
    ]
    &&&
    \underset{
      \mathclap{
        \raisebox{-5pt}{
          \scalebox{.7}{
            \eqref{SolutionSpace}
          }
        }
      }
    }{
      \mathrm{SolSpace}
    }
    \ar[
      r,
      equals,
      "{
        \scalebox{.7}{
          Thm. \ref{PhaseSpaceViaGaussLaws}
        }
      }"{swap, yshift=-2pt}
    ]
    &
    \;\;
    \underset{
      \mathclap{
        \raisebox{.3pt}{
          \scalebox{.7}{
            \color{darkblue}
            \bf
            \def\arraystretch{.9}
            \begin{tabular}{c}
              Cauchy data subject to
              \\
              $L_\infty$-algebraic Gau{\ss} law
            \end{tabular}
          }
        }
      }
    }{
    \mathbf{\Omega}_{\mathrm{dR}}\big(
      X^d
      ;\,
      \mathfrak{l}\mathcal{A}
    \big)_\closed
    }
  \end{tikzcd}
\end{equation}

\noindent
Indicated in the diagram on the right are our main observations in the following \cref{FluxQuantizationOnPhaseSpace}:
\begin{itemize}
\item[{\bf (i)}] {\bf Thm. \ref{PhaseSpaceViaGaussLaws}}: The solution spaces of flux densities of common higher gauge theories (such as appearing in higher
supergravity theories) are identified with spaces of flat $\mathfrak{a}$-valued differential forms, encoding the higher Gau{\ss}
law constraint on any Cauchy surface \eqref{FoliationByCauchySurfaces}, for $\mathfrak{a}$ a connective $L_\infty$-algebra of
finite type (Rem. \ref{ModuliProblemOfSolutions}).

In particular, this means that the phase space is given by purely cohomological data, while the metric data (Hodge-duality on fluxes) is all absorbed into
the isomorphism which identifies points in phase space with on-shell field histories.

\item[{\bf (ii)}] {\bf Def. \ref{IntrinsicPhaseSpace}}: A natural notion of flux quantization then is a choice of classifying space $\mathcal{A}$ whose
rational Whitehead $L_\infty$-algebra is $\mathfrak{l}\mathcal{A} \,\simeq\, \mathfrak{a}$, in which case the corresponding phase space of on-shell gauge potentials
is the moduli stack of (possibly non-abelian) differential $\mathcal{A}$-cocycles on the Cauchy surface, following \cite{Char}.

\end{itemize}

\medskip

\noindent
{\bf Examples and Applications.}
In \cref{ExamplesAndApplications} we first note that for basic cases like ordinary Maxwell theory (\cref{MaxwellField}) and the chiral boson (\cref{ChiralBoson})
this prescription subsumes existing proposals for flux-quantized phase spaces (but we highlight that there are other consistent non-abelian flux quantizations
even for these abelian theories).

The case of ordinary Maxwell theory may serve to appreciate the general result: Here it is well-familiar that {\it on a Cauchy surface} the magnetic flux
density $\differential A$ is {\it independent} from the electric flux density $E$; in fact, the latter is the canonical momentum to the canonical coordinate $A$.
It is only when extending initial value data $(\differential A, E)$ uniquely from the Cauchy surface to an on-shell field history on all of space-time that $E$
becomes identified with the spacetime-Hodge dual of $\differential A$. This Hodge duality relation, instead of being a constraint on the initial value data,
is part of the rule for extending that data to all of spacetime.

\medskip

Accordingly, in the example of the RR-fields in type II supergravity (\cref{RRFields}) our prescription gives flux quantization in topological K-theory for fields
{\it on a Cauchy surface} and here {\it without} a further self-duality constraint on the K-theory. This matches (and hence justifies) the general practice of K-theoretic
computations of D-brane charges.

\smallskip

In the same vein, it follows (in \cref{TheCField}) that a flux-quantized phase space of the C-field of 11d supergravity is given by the moduli
stack of differential Cohomotopy on a Cauchy surface, again without any further duality constraint. We close by highlighting some implications of this result.

\medskip
\medskip

\noindent
{\bf Acknowledgment.} We thank Grigorios Giotopoulos for useful discussion and the anonymous referees for helpful comments.

% Faced with the title of this note, the reader might well ask: ``Where else?!''. This is indeed the spirit of our note: What we observe may be quite obvious in hindsight, the computations are straightforward and could have been given 30 years ago; and yet they seem to resolve a conundrum which has haunted the field.

%%%%%%%%%%%%%%%%%%%%%%%%%%%%%%%%%%%%%%%%%%
\section{Flux quantization on Phase space}
\label{FluxQuantizationOnPhaseSpace}
%%%%%%%%%%%%%%%%%%%%%%%%%%%%%%%%%%%%%%%%%%

{\bf In \cref{HigherFluxDensities}} we prove that the space of on-shell flux densities of any higher gauge theory controlled by ``higher Maxwell equations'' on
a globally hyperbolic spacetime is equivalently the space of flux densities on a Cauchy surface satisfying a higher Gau{\ss} law (and otherwise unconstrained),
which in turn is equivalently a space of flat differential forms on the Cauchy surface with coefficients in a characteristic $L_\infty$-algebra.

\smallskip

\noindent
{\bf In \cref{FluxQuantizationAndPhaseSpace}} we observe that, therefore, a flux-quantized phase space for such a theory is given by the canonical differential
$\mathcal{A}$-cohomology for any classifying space $\mathcal{A}$ whose rational Whitehead $L_\infty$-algebra is the characteristic one.

Here, by ``phase space'', we mean the underlying space (smooth $\infty$-stack) while disregarding, for this note, its Poisson brackets
and their quantization. That also the algebras of quantum observables on fluxes have an immediate homotopy-theoretic construction, at least in the topological
sector, is the topic of the companion article \cite{SS23QuantumFluxObservables}.

%%%%%%%%%%%%%%%%%%%%%%%%%%%%%%%%%%%%%%%%
\subsection{Higher Maxwell equations on Higher flux densities}
\label{HigherFluxDensities}
%%%%%%%%%%%%%%%%%%%%%%%%%%%%%%%%%%%%%%%%%%

That the main result we are presenting here (Thm. \ref{PhaseSpaceViaGaussLaws} below) has a rather easy proof (Lem. \ref{GaussLawIsFirstClassConstraint}, cf. \cref{Computations})
may be attributed to the well-adapted form \eqref{TheCovariantEquationsOfMotion} into which we cast the general higher Maxwell equations to start with,
known in electromagnetism (cf. \cref{MaxwellField}) as the ``premetric'' form \cite{HehlItinObukhov16} and in supergravity and string theory as
the ``duality symmetric'' or ``democratic'' form (cf. references in \cref{RRFields} \& \cref{TheCField}).

\smallskip
In a sense, this duality-symmetric formulation makes {\it all} higher gauge theories appear as self-dual higher gauge theories
(with ``doubled'' field content if they are not self-dual in the na{\"i}ve sense, as in the example \cref{MaxwellField}); which conversely
means that our formulation of self-dual higher gauge theories (as in Ex. \cref{ChiralBoson}, \cref{RRFields}) is no more intricate than the
general case, in stark contrast to traditional approaches.

\newpage

\noindent
{\bf Globally hyperbolic spacetime.}
To set the scene for canonical phase space analysis, consider the following:

\begin{itemize}[leftmargin=.5cm]
\item
$X^D$ a $D$-dimensional spacetime manifold,
\item with corresponding Hodge star operator on differential forms (eg. \cite[\S 14.1a]{Frankel97}):
\vspace{1mm}
\begin{equation}
  \label{HodgeStarOperator}
 \star
 \,:\,
 \Omega^{p}_{\mathrm{dR}}\big(
   X^{D}
 \big)
 \xrightarrow{\;\; \sim \;\;}
 \Omega^{D-p}_{\mathrm{dR}}\big(
   X^{D}
 \big)
 ,
 \;\;\;\;
  \star \, \star
  \;=\;
  -(-1)^{p(D-1)}
  \,,
\end{equation}
\item
which is globally hyperbolic (e.g. \cite[\S 3.11]{MinguzziSanchez08}) exhibited by a smooth foliation by spacelike Cauchy surfaces (cf. \cite[Thm. 1.1]{BernalSanchez05}):
\begin{equation}
  \label{FoliationByCauchySurfaces}
  X^{D}
    \;\simeq\;
  \mathbb{R}^{0,1} \times X^{d}
  \,,
\end{equation}
\item $X^d \hookrightarrow X^D $
its $d$-dimensional Cauchy surface
at $t = 0$;
\item
$\partial_t \,\in\, \Gamma (T X^{D})$ the corresponding timelike vector field,
\item
with its contraction operation
$$
  \iota_{\partial_t}
  \;:\;
  \Omega^\bullet_{\mathrm{dR}}(X^{D})
  \longrightarrow
  \Omega^{\bullet-1}_{\mathrm{dR}}(X^{D})
  \,,
$$
\item
whose kernel we denote:
$$
  \Omega^\bullet_{\mathrm{dR}}(X^{D})_{\iota_{\partial t} = 0}
  \;\defneq\;
  \mathrm{ker}(\iota_{\partial_t})
  \,.
$$
\item
This induces a decomposition of differential forms
\begin{equation}
  \label{DecompositionOfDifferentialForms}
  \Omega^{p}_{\mathrm{dR}}(X^{D})
  \;\simeq\;
  \Omega^p_{\mathrm{dR}}(X^{D})_{\iota_{\partial t} = 0}
  \;\oplus\;
  \star
  \,
  \Omega^{D-p}_{\mathrm{dR}}(X^{D})_{\iota_{\partial t} = 0}
\end{equation}
into summands that have none or have one wedge factor of $\differential t$, respectively.
\item
The de Rham differential on $X^{D}$ accordingly decomposes into a temporal and a spatial summand
$
  \differential
  \,=\,
  \differential_t
  \,+\,
  \differential_{s}
$
$$
  \begin{tikzcd}[row sep=-2pt]
    \Omega^p_{\mathrm{dR}}(X^{D})
    \ar[
      rr,
      "{
        \differential
      }"
    ]
    &&
    \Omega^{p+1}_{\mathrm{dR}}(X^{D})
    \\[+5pt]
    \Omega^p_{\mathrm{dR}}(X^{D})_{\iota_{\partial_t} = 0}
    \ar[
      rr,
      "{
        \differential_s
      }"
    ]
    \ar[
      ddrr,
      "{
        \differential_t
      }"{description}
    ]
    &&
    \Omega^{p+1}_{\mathrm{dR}}(X^{D})_{\iota_{\partial_t} = 0}
    \\
    \oplus
    &&
    \oplus
    \\
    \star
    \,
    \Omega^p_{\mathrm{dR}}(X^{D})_{\iota_{\partial_t} = 0}
    \ar[
      rr,
      "{
        \differential_s
      }"{swap}
    ]
    &&
    \star
    \,
    \Omega^{p-1}_{\mathrm{dR}}(X^{D})_{\iota_{\partial_t} = 0}
  \end{tikzcd}
$$
making it the total differential of a bicomplex:
\begin{equation}
  \label{BicomplexStructure}
  \differential_t \circ \differential_t
   \;=\; 0
   \,,\hspace{16pt}
  \differential_s \circ \differential_s
  \;=\;
  0
  \,,\hspace{16pt}
  \differential_t \circ \differential_s
  \;=\;
  - \differential_s \circ \differential_t
  \,.
\end{equation}
\end{itemize}

\medskip

\noindent
{\bf Higher Maxwell equations in duality-symmetric form.} By a system of {\it higher flux densities} on $X^D$ satisfying
{\it higher Maxwell equations} of a higher gauge field theory, we mean here a system of differential equations of this form:
\begin{equation}
  \label{TheCovariantEquationsOfMotion}
  \hspace{-10pt}
  \adjustbox{
    fbox
  }{$
  \hspace{13pt}
  \def\arraystretch{2}
  \begin{array}{c}
    \overset{
      \mathclap{
        \raisebox{3pt}{
        \scalebox{.7}{
          \color{gray}
          \bf
          Higher Maxwell equation
        }
        }
      }
    }{
    \differential
    \,
    \vec F
    \;=\;
    \vec P\big(
      \vec F
    \big)
    }
    \\
    \underset{
     \mathclap{
       \raisebox{-3pt}{
         \scalebox{.7}{
           \color{gray}
           \bf
           Constitutive equation
         }
       }
     }
    }{
    \star \, \vec F
    \,=\,
    \vec \mu\big(
      \vec F
    \big)
    }
  \end{array}
  \hspace{1cm}
  \def\arraystretch{1.6}
  \begin{array}{l}
    \overset{
      \mathclap{
        \raisebox{3pt}{
          \scalebox{.7}{
            \color{gray}
            \bf
            \def\arraystretch{.9}
            \begin{tabular}{c}
              index set of
              \\
              flux species
            \end{tabular}
          }
        }
      }
    }{
      I \,\in\, \mathrm{Sets}
    },
    \;
    \overset{
      \mathclap{
        \raisebox{2pt}{
          \scalebox{.7}{
            \color{gray}
            \bf
            degrees
          }
        }
      }
    }{
      \big(\mathrm{deg}_i \,\in\,
      \mathbb{N}_{\geq 1}
      \big)_{i \in I}
    }
    \,,\;\;
    \overset{
      \mathclap{
       \raisebox{2pt}{
         \scalebox{.7}{
           \color{gray}
           \bf
           flux densities
         }
       }
      }
    }{
    \vec F
    \,\defneq\,
    \Big(
    F^{(i)}
    \;\in\;
  \Omega^{\mathrm{deg}_i}_{\mathrm{dR}}
  \big(X^D\big)
  \Big)_{i \in I}
  }
  \\
  \underset{
    \mathclap{
      \hspace{-3cm}
      \raisebox{2pt}{
      \scalebox{.7}{
        \color{gray}
        \bf
        fluxes sourcing fluxes
        }
      }
    }
  }{
  \vec P
  \,\defneq\,
  \big(
    \mbox{$P^{(i)}$ grd. symm. polynomial}
  \big)_{i \in I}
  }
  \,,\;\;
  \underset{
    \mathclap{
      \raisebox{-2pt}{
        \scalebox{.7}{
          \color{gray}
          \bf vacuum permittivity
        }
      }
    }
  }{
  \mbox{
    $\vec \mu$
    invertbl. matrix
  }
  }
  \end{array}
  \hspace{-6pt}
  $}
\end{equation}

\smallskip
\noindent whose space of solutions we refer to as the {\it solution space of flux densities}: \footnote{
In \cref{FluxQuantizationAndPhaseSpace} we regard the solution space \eqref{SolutionSpace} as a 0-truncated smooth stack (a {\it smooth set}), but for
the moment the reader may think of it as just a plain set.
}
\begin{equation}
  \label{SolutionSpace}
  \mathrm{SolSpace}
  \,\defneq\,
  \left\{
    \vec F
    \,\defneq\,
    \Big(
    F^{(i)}
    \,\in\,
    \Omega^{\mathrm{deg}_i}_{\mathrm{dR}}\big(
      X^D
    \big)
  \Big)
  \,\middle\vert\,
  \def\arraystretch{1.4}
  \begin{array}{l}
    \differential
    \,
    \vec F \,=\,
    \vec P\big(\vec F\big)
    \\
    \star \, \vec F
    \,=\,
    \vec \mu\big(\vec F \big)
  \end{array}
 \!\! \right\}
  \!.
\end{equation}

\smallskip

\noindent Notice that we do not assume the higher Maxwell equations to be Euler-Lagrange equations of a Lagrangian density,
nor do Lagrangian densities play any role here; cf. Rem. \ref{RelationToTraditionalConstructionOfPhaseSpace} below.

\medskip

\noindent
{\bf Canonical formulation.}
Under the space/time-decomposition \eqref{DecompositionOfDifferentialForms}, the covariant equations of motion
\eqref{TheCovariantEquationsOfMotion} on the uniquely decomposed flux densities
\begin{equation}
  \label{CanonicalDecompositionOfFluxDensities}
  \hspace{-1cm}
  \vec F
  \;=\;
  \vec B
  \,+\,
  \star \,
  \vec E
  \,,
  \hspace{1cm}
  \mbox{for}
  \left\{\!\!
  \def\arraystretch{1.9}
  \begin{array}{l}
    \overset{
      \mathrlap{
        \hspace{35pt}
        \raisebox{3pt}{
          \scalebox{.7}{
            \color{gray}
            \bf
            Magnetic flux densities
          }
        }
      }
    }{
    \vec B
    \,\defneq\,
    }
    \Big(
      B^{(i)}
      \,\in\,
      \Omega^{\mathrm{deg}_i}_{\mathrm{dR}}\big(
        X^D
      \big)_{\iota_{\partial_t} = 0}
    \Big)_{i \in I}
    \\
    \underset{
      \mathrlap{
        \hspace{34pt}
        \raisebox{-7pt}{
          \scalebox{.7}{
            \color{gray}
            \bf
            Electric flux densities
          }
        }
      }
    }{
    \vec E
    \,\defneq\,
    }
    \Big(
      E^{(i)}
      \,\in\,
      \Omega^{D-\mathrm{deg}_i}_{\mathrm{dR}}\big(
        X^D
      \big)_{\iota_{\partial_t} = 0}
    \Big)_{i \in I}
  \end{array}
  \right.
\end{equation}
become:
\begin{equation}
  \label{GaussFaradayAmpereEquations}
  \hspace{-.1cm}
  \def\arraystretch{1.6}
  \def\arraycolsep{20pt}
  \begin{array}{rcl}
    \differential
    \,
    \vec F
    \;=\;
    \vec P\big(
      \vec F \,
    \big)
    &\Leftrightarrow&
    \left\{\!
    \def\arraystretch{1.6}
    \def\arraycolsep{7pt}
    \begin{array}{l}
      \overset{
        \mathclap{
          \raisebox{3pt}{
            \scalebox{.7}{
              \color{gray}
              \bf
              Higher Gau{\ss} law
            }
          }
        }
      }{
      \differential_s
      \,
      \vec B
      \,=\,
      \vec P\big(
        \vec B
      \big)
      }
      \\
      \underset{
        \mathclap{
          \hspace{-2.8cm}
          \raisebox{-3pt}{
            \scalebox{.7}{
              \color{gray}
              \bf
              Higher Faraday-Amp{\`e}re law
            }
          }
        }
      }{
      \differential_t
      \,
      \vec B
      \,=\,
      -
      \differential_s
      \star
      \vec E
      \,+\,
      (\star E^{(i)})
      \wedge
      \frac{\delta}{\delta B^{(i)}}
      \vec P\big(
        \vec B
      \big)
      }
    \end{array}
    \right.
    \\
    \Downarrow
    \hspace{30pt}
    \;
    \\
    \differential
    \,
    \vec P\big(
      \vec F
    \big)
    \,=\,
    0
    &\Leftrightarrow&
    \left\{\!\!\!\!\!\!\!\!
    \def\arraystretch{1.6}
    \begin{array}{l}
      \differential_s
      \,
      \vec P
      \big(
        \vec B
      \big)
      \,=\,
      0
      \\
      \differential_t
      \,
      \vec P\big(
        \vec B
      \big)
      \underset{
        \mathclap{
          \hspace{-5pt}
          \raisebox{-8pt}{
            \scalebox{.7}{
              \color{gray}
              \bf
              Integrability condition
            }
          }
        }
      }{
        \,=\,
      }
      -
      \differential_s
      \Big(
      \big(
        \star E^{(i)}
      \big)
      \wedge
      \frac{\delta}
      {\delta B^{(i)}}
      \vec P
      \big(
        \vec B
      \big)
      \Big)
    \end{array}
    \right.
    \\[30pt]
    \star \, \vec F
    \,=\,
    \vec \mu\big(\vec F\big)
    &\Leftrightarrow&
    \left\{\!\!\!\!\!\!\!
    \begin{array}{l}
    \vec E
    \underset{
      \mathrlap{
        \hspace{8pt}
        \raisebox{-5pt}{
          \scalebox{.7}{
            \color{gray}
            \bf
            Duality relation
          }
        }
      }
    }{
      \,=\,
    }
    (-1)^{(D-\mathrm{deg}_i)\mathrm{deg}_i}
    \,
    \vec \mu \big(\vec B\big)
    \mathrlap{\,.}
    \end{array}
    \right.
  \end{array}
\end{equation}

\begin{lemma}[Gau{\ss} law is first class constraint]
\label{GaussLawIsFirstClassConstraint}
The Gau{\ss} law \eqref{GaussFaradayAmpereEquations} is preserved by time evolution: If the Faraday-Amp{\`e}re law holds on
$X^D$ and the Gau{\ss} laws holds on a Cauchy surface $X^d$, then the Gau{\ss} law also holds on all of $X^D$.
\end{lemma}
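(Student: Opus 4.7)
The plan is to recognize the Gau{\ss}-deviation
$\vec{G} \defneq \differential_s \vec{B} - \vec{P}(\vec{B})$
as the purely spatial ($\iota_{\partial_t} = 0$) component of the total defect $\differential \vec{F} - \vec{P}(\vec{F})$, while the Faraday-Amp{\`e}re law from \eqref{GaussFaradayAmpereEquations} is exactly the vanishing of the complementary one-$\differential t$ component. Under the hypothesis of the lemma the full identity on $X^D$ therefore reads $\differential \vec{F} = \vec{G} + \vec{P}(\vec{F})$, where $\vec{G}$ is regarded as a time-dependent family of spatial forms on $X^d$ whose initial value vanishes at $t = 0$ by the Cauchy-surface Gau{\ss} law.

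Next I would apply $\differential$ once more: the vanishing $\differential^2 \vec{F} = 0$ together with the chain rule yields
$\differential \vec{G} = -\,\differential \vec{P}(\vec{F}) = -\frac{\delta \vec{P}(\vec{F})}{\delta F^{(i)}} \wedge \differential F^{(i)} = -\frac{\delta \vec{P}(\vec{F})}{\delta F^{(i)}} \wedge \big(G^{(i)} + P^{(i)}(\vec{F})\big)$.
At this point the Maurer-Cartan / Jacobi identity $\frac{\delta \vec{P}}{\delta F^{(i)}} \wedge P^{(i)}(\vec{F}) = 0$ --- which is precisely the algebraic content of $\vec{P}$ arising from a characteristic $L_\infty$-algebra, equivalently the integrability condition $\differential \vec{P}(\vec{F}) = 0$ displayed in \eqref{GaussFaradayAmpereEquations} read as a polynomial identity in $\vec{F}$ --- annihilates the second summand, leaving the manifestly linear and homogeneous relation $\differential \vec{G} \,=\, -\frac{\delta \vec{P}(\vec{F})}{\delta F^{(i)}} \wedge G^{(i)}$.

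Extracting the one-$\differential t$ component of this relation under the decomposition \eqref{DecompositionOfDifferentialForms} gives a first-order linear ODE in $t$ for the spatial-form-valued family $t \mapsto \vec{G}(t, -) \in \bigoplus_{i \in I} \Omega^\bullet_{\mathrm{dR}}(X^d)$, with smooth $t$-dependent coefficient operators built from $\vec{B}$ and $\star \vec{E}$. Since $\vec{G}(0, -) = 0$ by hypothesis, standard uniqueness for such an ODE forces $\vec{G} \equiv 0$ on all of $X^D$, which is the Gau{\ss} law everywhere. The hard part will be the bookkeeping of graded signs in the chain-rule step and in the extraction of the temporal component; the conceptual content --- that the Gau{\ss} constraint is preserved by time evolution because $\vec{P}$ is Maurer-Cartan --- is the higher-gauge-theoretic avatar of the familiar fact from constrained Hamiltonian mechanics that first-class constraints are preserved by the Hamiltonian flow they generate.
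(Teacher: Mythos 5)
Your argument is correct and reaches the paper's conclusion by a genuinely different route. The paper's proof shows that the Gau{\ss} deviation is literally \emph{constant} in time, computing $\differential_t\big(\differential_s \vec B - \vec P(\vec B)\big) = 0$ by substituting the Faraday--Amp{\`e}re law into the first summand and the ``integrability condition'' of \eqref{GaussFaradayAmpereEquations} into the second (with the cancellations spelled out in \cref{Computations}). You instead package the hypothesis as $\differential \vec F = \vec G + \vec P(\vec F)$ with $\vec G \defneq \differential_s \vec B - \vec P(\vec B)$ purely spatial, apply $\differential$, and use $\differential^2 = 0$, the graded chain rule, and the Jacobi identity $\frac{\delta \vec P}{\delta F^{(i)}} \wedge P^{(i)}(\vec F) = 0$ to obtain the linear homogeneous transport law $\differential \vec G = -\frac{\delta \vec P}{\delta F^{(i)}} \wedge G^{(i)}$, whose $\differential t$-component plus ODE uniqueness kills $\vec G$. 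Your version buys two things: (i) it only needs $\partial_t \vec G$ to be \emph{linear} in $\vec G$, not identically zero, so it is insensitive to the residual Leibniz term $(\star E^{(i)}) \wedge \differential_s \frac{\delta}{\delta B^{(i)}}\vec P(\vec B)$ that the paper's pointwise cancellation must dispose of; and (ii) it isolates the one algebraic input beyond the Faraday--Amp{\`e}re law, namely the Maurer--Cartan/Jacobi identity for $\vec P$ (the condition of Rem.~\ref{ModuliProblemOfSolutions}), whereas the paper invokes the ``integrability condition'', which \eqref{GaussFaradayAmpereEquations} displays as a consequence of the full equation $\differential \vec F = \vec P(\vec F)$ that the Lemma does not assume. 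What the paper's route buys is brevity and a completely elementary cancellation with no appeal to ODE theory. When you do the deferred sign bookkeeping, verify the one structural point your uniqueness step relies on: the temporal component of your transport law is $\differential_t \vec G = -\big(\star E^{(j)}\big) \wedge \frac{\delta^2 \vec P}{\delta B^{(j)}\delta B^{(i)}}(\vec B) \wedge G^{(i)}$, i.e.\ the coefficient operator acts by wedging only, with no spatial derivatives of $\vec G$, so the ODE is pointwise in $X^d$ with values in finite-dimensional fibers and uniqueness is indeed standard.
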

\begin{proof}
It is sufficient to observe that the time derivative of the Gau{\ss} law vanishes:
\[
    \differential_t
    \,
    \Big(
      \differential_s
      \,
      \vec B
      \,-\,
      \vec P\big(
        \vec B
      \big)
    \Big)
    \;=\;
    -
    \differential_s
    \,
    \differential_t
    \,
    \vec B
    -
    \differential_t
    \,
    \vec P\big(
      \vec B
    \big)
    \;=\;
    0
    \,.
\]
Here the last step is immediate by using the Faraday-Amp{\`e}re law on the first summand and its integrability condition on the second.
(We have recorded more details in \cref{Computations}.)
\end{proof}
\begin{theorem}[Canonical solution space]
  \label{PhaseSpaceViaGaussLaws}
  The solution space to \eqref{TheCovariantEquationsOfMotion} is identified with the space of $\vec B$-fields on a Cauchy surface satisfying (just) the Gau{\ss} law:
  \begin{equation}
    \label{CanonicalSolutionSpace}
    \mathrm{SolSpace}
    \;\;\simeq\;\;
    \left\{
      \vec B
      \,\defneq\,
    \Big(
      B^{(i)} \,\in\,
      \Omega^{\mathrm{deg}_i}_{\mathrm{dR}}
      (X^d)
    \Big)_{i \in I}
    \,\middle\vert\,
    \begin{array}{l}
     \differential
     \,
     \vec B
     \,=\,
     \vec P
     \big(
       \vec B
     \big)
    \end{array}
    \right\}
    \,.
  \end{equation}
\end{theorem}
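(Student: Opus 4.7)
The plan is to exhibit mutually inverse maps between $\mathrm{SolSpace}$ \eqref{SolutionSpace} and the Gau{\ss}-constrained Cauchy-data space on the right-hand side of \eqref{CanonicalSolutionSpace}. The \textbf{forward map} is the restriction map: given a covariant solution $\vec F$ on $X^D$, decompose it as $\vec F = \vec B + \star \vec E$ per \eqref{CanonicalDecompositionOfFluxDensities} and restrict $\vec B$ along $X^d \hookrightarrow X^D$. The decomposed system \eqref{GaussFaradayAmpereEquations} shows immediately that $\vec B\vert_{X^d}$ satisfies the higher Gau{\ss} law $\differential \vec B = \vec P(\vec B)$, so the map is well defined.

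The substantive direction is the \textbf{backward map}. Given initial data $\vec B_0$ on $X^d$ satisfying the Gau{\ss} law, the key observation is that after substituting the algebraic duality relation $\vec E = \pm\,\vec \mu(\vec B)$ from the constitutive equation into the Faraday-Amp{\`e}re law \eqref{GaussFaradayAmpereEquations}, the right-hand side depends only on $\vec B$ and its spatial derivatives, turning that equation into a first-order evolution equation of the schematic form $\partial_t \vec B = \mathcal{F}(\vec B)$. Using the Cauchy foliation \eqref{FoliationByCauchySurfaces} one then extends $\vec B_0$ to a one-parameter family $\vec B(t)$ along the trivialized timelike direction, sets $\vec E(t) \defneq \pm\,\vec\mu(\vec B(t))$ and defines $\vec F \defneq \vec B + \star \vec E$ on $X^D$.

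It remains to verify that $\vec F$ so constructed lies in $\mathrm{SolSpace}$ and that the two maps are mutually inverse. The constitutive equation $\star \vec F = \vec \mu(\vec F)$ holds by construction from the duality relation together with the involution property \eqref{HodgeStarOperator} of $\star$; the Faraday-Amp{\`e}re half of the higher Maxwell equation is built into the evolution; and the Gau{\ss} law holds on $X^d$ by assumption, hence throughout $X^D$ by Lemma \ref{GaussLawIsFirstClassConstraint}. Mutual inverseness is then immediate: backward-then-forward is tautological, while forward-then-backward reproduces the original solution by uniqueness of the first-order time evolution. In particular, note that the ``integrability condition'' appearing in \eqref{GaussFaradayAmpereEquations} need not be imposed separately, being a formal consequence of $\differential \circ \differential = 0$ applied to the higher Maxwell equation.

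The principal obstacle is well-posedness of $\partial_t \vec B = \mathcal{F}(\vec B)$, i.e.\ ensuring unique smooth extension of arbitrary smooth Cauchy data to all of $X^D$. Because the problem is first-order in $t$ with right-hand side a smooth polynomial-differential-operator-valued functional of $\vec B$ on $X^d$ (and $\vec \mu$ is a constant invertible matrix), and because the foliation \eqref{FoliationByCauchySurfaces} reduces the Cauchy problem to integration along the trivialized timelike direction of the globally hyperbolic $X^D$, this is handled by the standard theory of first-order hyperbolic Cauchy problems. In the examples of interest (Maxwell \cref{MaxwellField}, chiral boson \cref{ChiralBoson}, RR sector \cref{RRFields}, C-field \cref{TheCField}) the underlying evolution is the familiar wave-type system and well-posedness is classical, so the obstacle is essentially bookkeeping rather than analysis.
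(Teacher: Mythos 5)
Your proposal is correct and follows essentially the same route as the paper's own proof: substitute the duality relation into the Faraday--Amp{\`e}re law to obtain a first-order time-evolution equation for $\vec B$, extend the Cauchy data uniquely along the foliation \eqref{FoliationByCauchySurfaces}, and invoke Lemma \ref{GaussLawIsFirstClassConstraint} to propagate the Gau{\ss} law off the Cauchy surface. Your added remarks on mutual inverseness and on well-posedness of the evolution merely make explicit what the paper asserts tersely (``with a unique solution on $X^D$ for every choice of $\vec B$-fields on $X^d$'').
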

\begin{proof}
  By the duality relation, the $\vec E$-fields are just different names for $\vec B$-fields.
  Inserting the duality relation into the Faraday-Amp{\`e}re law makes the latter a first-order differential equation on the $\vec B$-fields,
  with a unique solution on $X^D$ for every choice of $\vec B$-fields on $X^d$. For every such solution, Lem. \ref{GaussLawIsFirstClassConstraint}
  says that it is also a solution of the Gau{\ss} law iff it is so on a Cauchy surface.
\end{proof}

\newpage
\begin{remark}[Cohomological nature of the canonical solution space]
\label{CanonicalNatureOfSolutionSpace}
The import of Thm. \ref{PhaseSpaceViaGaussLaws} is to exhibit the canonical solution space as a purely cohomological
construction -- cf. \eqref{SolutionSpaceViaLInfinity} below -- which, {\it at face value}, is independent of the
pseudo-Riemannian structure encoded in the Hodge-star operator. The latter is all absorbed into the isomorphism \eqref{CanonicalSolutionSpace} between
cohomologically constrained initial value data on a Cauchy surface
and
the corresponding solutions on all of spacetime. This isomorphism is given by the Faraday-Amp{\'e}re law \eqref{GaussFaradayAmpereEquations}
which maps points in canonical phase space to actual field histories on spacetime, using the duality relation to identify the $\vec E$-fields with $\vec B$-fields.

This is noteworthy but not surprising. Already in the familiar case of the phase space of vacuum electromagnetism (recalled in \cref{MaxwellField}),
it is well-known that the electric flux density is entirely independent of the magnetic flux density on a Cauchy surface, the former being
the differential $\differential A$ of the canonical coordinate $A$ and the latter being the corresponding canonical momentum $E$.
\end{remark}

\begin{remark}[\bf The moduli problem of solutions]
  \label{ModuliProblemOfSolutions}
  For all theories of interest, the functor of solution sets \eqref{PhaseSpaceViaGaussLaws}
  of their higher Maxwell equations  is representable
  \begin{equation}
    \label{Representability}
    \Big\{
      \vec B \,\in\,
      \Omega^{\vec {\mathrm{deg}}}_{\mathrm{dR}}(-)
      \,\big\vert\,
      \differential \, B
      \,=\,
      \vec P\big(
        \vec B
      \big)
    \Big\}
    \;\simeq\;
    \mathrm{Hom}_{\mathrm{dgAlg}}\big(
      \mathrm{CE}(\mathfrak{a})
      ,\,
      \Omega^\bullet_{\mathrm{dR}}(-)
    \big)
  \end{equation}

 \noindent
 by a dgc-algebra (cf. \cite[\S 4]{Char}) $\mathrm{CE}(\mathfrak{a})$,
  which is the quotient of the free dgc-algebra on the flux species by the abstract Gau{\ss} law relation (cf. \cite[Ex. 4.15]{Char}):
  \begin{equation}
    \label{CEOfCharacteristicLInfinityAlgebra}
    \mathrm{CE}(\mathfrak{a})
    \;\defneq\;
    \mathbb{R}
    \Big[
      \vec b
      \,\defneq\,
      \big(
        b^{(i)}_{\mathrm{deg}_i}
      \big)_{i \in I}
    \Big]
    \Big/ \!\!
    \left(
      \differential
      \,
      \vec b \,=\, \vec P\big(
        \vec b \,
      \big)
    \! \right)
    \!.
  \end{equation}
  This requires that the integrability condition holds purely algebraically, namely when differential forms $\vec B$ are
replaced by abstract algebra generators $\vec b$ of the same degree:
  $
    \differential
    \, \vec P\big( \vec b \big)
    \,=\,
    0
    \,.
  $ If there are finitely many $b^{(i)}_{\mathrm{deg}_i}$ for each degree $\mathrm{deg}_i$ then this means equivalently that $\mathrm{CE}(\mathfrak{a})$
  is the Chevalley-Eilenberg algebra of an $L_\infty$-algebra $\mathfrak{a}$ (cf. \cite[Def. 4.13]{Char}), the {\it characteristic $L_\infty$-algebra}
  of the given higher gauge theory.

  \smallskip

  In this case the right hand side of \eqref{Representability} is a functor of closed $\mathfrak{a}$-valued differential forms (the ``Maurer-Cartan elements'' in $\Omega^\bullet_{\mathrm{dR}}(-) \otimes \mathfrak{g}$) \cite[Def. 6.1]{Char}:
  \begin{equation}
  \label{SolutionSpaceViaLInfinity}
  \mathrm{SolSpace}
  \;\simeq\;
  \Omega_{\mathrm{dR}}\big(
    X^{d}
    ;\,
    \mathfrak{a}
  \big)_{\closed}
\,.
\end{equation}
\end{remark}
This observation is the key to understanding flux quantization laws via non-abelian differential cohomology \cite{Char}, to which we turn in \cref{FluxQuantizationAndPhaseSpace}

\medskip

%%%%%%%%%%%%%%%%%%%%%%%%%%%%%%%%%%%%%%%
\subsection{Flux quantization and Phase Space}
\label{FluxQuantizationAndPhaseSpace}
%%%%%%%%%%%%%%%%%%%%%%%%%%%%%%%%%%%%%%%

We explain how, in view of Thm. \ref{PhaseSpaceViaGaussLaws} and Rem. \ref{CanonicalNatureOfSolutionSpace}, the general principle behind flux quantization
on phase space is now almost self-evident: passage to the non-abelian differential cohomology of \cite{Char}.

\medskip

\noindent
{\bf Smooth sets and Smooth groupoids.}
One wants to understand the solution space \eqref{CanonicalSolutionSpace} not just as a bare set, but as a differential-geometric space.
If $X^d$ is compact then one can equip it with the structure of an infinite-dimensional Fr{\'e}chet manifold, but in general not even that
is possible. However, in the incarnation \eqref{SolutionSpaceViaLInfinity} it canonically carries the convenient structure of a
``smooth set'' \cite{GiotopoulosSati23}\cite{Schreiber24}, which just means to regard the whole system of solutions in arbitrary {\it smooth families}
indexed by $p$-parameter spaces $\mathbb{R}^p$ for all $p$:
\medskip
\begin{equation}
  \label{SolutionSpaceAsSmoothSet}
  \mathbf{\Omega}_{\mathrm{dR}}\big(
    X^d
    ;\, \mathfrak{a}
  \big)
  \;:\;
  \mathbb{R}^p
  \;\mapsto\;
  \Omega_{\mathrm{dR}}\big(
    X^d \times \mathbb{R}^p
    ;\,
    \mathfrak{a}
  \big)_\closed
  \,.
\end{equation}
(Notice that solutions in smooth families do make good sense due to Thm. \ref{PhaseSpaceViaGaussLaws}, as discussed in
Rem. \ref{CanonicalNatureOfSolutionSpace}.)

\smallskip

Technically this means to regard $\mathrm{SolSpace}$ as a ``0-truncated object'' in the  ``$\infty$-topos of smooth $\infty$-groupoids'' (cf. \cite[pp. 41]{Char}\cite[\S 1]{AlfonsiYoung23}). While we will invoke some of this language of stacky differential homotopy theory now, in order to be precise, the reader unfamiliar with this technology may, without much loss, think of the following Def. \ref{FluxQuantizationLaws} as a formal machine which adjoins to the solution space of flux densities all the possible corresponding {\it gauge potentials}, while also taking care of gauge redundancy (cf. Rem. \ref{StackyPhaseSpaceReduction}).

\medskip

\noindent
{\bf Flux quantization.} In the language of stacky homotopy theory, we naturally arrive at the following notions (for more survey see \cite{SS24Flux}):

\begin{definition}[Flux quantization law]
\label{FluxQuantizationLaws}
Given higher flux densities \eqref{TheCovariantEquationsOfMotion} with characteristic $L_\infty$-algebra $\mathfrak{a}$ \eqref{CEOfCharacteristicLInfinityAlgebra}, a {\it flux quantization law} is a choice of (homotopy type of) a classifying space\footnote{
  For our purposes here a ``classifying space'' is a simply connected topological space with finite-dimensional rational cohomology in each degree, so that the fundamental theorem of dg-algebra rational homotopy theory applies to it, as reviewed in \cite[\S 5]{Char}. One can relax these assumptions (cf. \cite[Rem. 5.1]{Char}), and will want to do so, but this goes beyond the intended scope of this note.
} $\mathcal{A}$ whose rational Whitehead $L_\infty$-algebra \cite[Prop. 5.11]{Char} is $\mathfrak{l}A \,\simeq\, \mathfrak{a}$.
\end{definition}
\begin{definition}[Flux-quantized phase space]
\label{IntrinsicPhaseSpace}
Given a flux quantization law $\mathcal{A}$ (Def. \ref{FluxQuantizationLaws}), we say that the {\it Phase space} of the higher gauge theory
is the moduli stack of differential $\mathcal{A}$-cohomology \cite[Def. 9.3]{Char} on $X^d$:
\begin{equation}
  \label{PhaseSpace}
  \mathrm{PhaseSpace}
  \;\defneq\;
    \mathbf{\Omega}_{\mathrm{dR}}\big(
      X^d
      ;\,
      \mathfrak{l}\mathcal{A}
    \big)_{\closed}
    \underset{
      {
        L^{\mathbb{R}}
        \mathcal{A}(X^d)
      }
    }{\times}
    \mathcal{A}\big(X^d\big)
    \,,
\end{equation}
where the notation on the right denotes the mapping stack from $X^d$ into the homotopy pullback of the $\mathbb{R}$-rationalization \cite[Def. 5.7]{Char}
of the homotopy type $\mathcal{A}$, using its (non-abelian) differential character map \cite[Def. 9.2]{Char}.
\end{definition}
\begin{remark}[Scope of examples]
\label{ScopeOfExamples}
$\,$

 \noindent {\bf (i)}  For $\mathcal{A} \underset{\mathrm{whe}}{\,\simeq\,} B^n \mathrm{U}(1) \underset{\mathrm{whe}}{\,\simeq\,}
 B^{n+1}\mathbb{Z} \,\defneq\, K(\mathbb{Z}, n+1)$,  the right hand side of \eqref{PhaseSpace} reduces \cite[Prop. 9.5]{Char} to ordinary
 integral differential cohomology (Cheeger-Simons characters, cf. \cite[\S 9]{Char}) modeled for $n =1$ by principal $\mathrm{U}(1)$-bundles
 with connection  and for $n = 2$ by $\mathrm{U}(1)$-bundle gerbe connection, etc.; see the examples in \cref{MaxwellField} \& \cref{ChiralBoson} below.

 \noindent {\bf (ii)}   More generally, for $\mathcal{A} \,=\, E_n$ the $n$th component in a {\it spectrum} ($\Omega$-spectrum) of pointed topological spaces,
 the right-hand side of \eqref{PhaseSpace} reduces \cite[Ex. 9.1]{Char} to the corresponding Whitehead-generalized ``canonical'' differential cohomology
 theory $\widehat E(-)$. (For $E = \mathrm{KU}$ this is the case of differential K-theory, cf. \cite[Ex. 9.2]{Char} and \cref{RRFields} below).
 This is the generality in which charge quantization by differential cohomology has been discussed in most of the literature.

 \noindent {\bf (iii)}   But as soon as the higher gauge theory has non-linearities where combinations of fluxes act as sources for other fluxes
 (as is the case for the 11d SuGra C-field, cf. \cref{TheCField}), no Whitehead-generalized cohomology theory is an admissible flux quantization
 in the sense of Def. \ref{FluxQuantizationLaws}. Instead, in these cases the flux-quantization law is given by a classifying space with
 non-trivial Whitehead $L_\infty$-brackets and and by a non-abelian differential cohomology theory in the sense of \cite{Char}.
\end{remark}

\begin{remark}[Relation to traditional construction of phase space]
  \label{RelationToTraditionalConstructionOfPhaseSpace}
  $\,$

 \noindent {\bf (i)}   The traditional way to arrive at the phase space of a higher gauge theory is completely different
 from Def. \ref{IntrinsicPhaseSpace}. In particular, our construction does not refer to a Lagrangian density.

   \noindent {\bf (ii)}  That Def. \ref{IntrinsicPhaseSpace} is nevertheless the correct definition of phase spaces of higher gauge
   theory follows from Thm. \ref{PhaseSpaceViaGaussLaws} and Def. \ref{IntrinsicPhaseSpace}, saying that choices of gauge potentials for given
   on-shell fluxes are lifts of the latter to cocycles in differential cohomology \eqref{PhaseSpaceOverSolutionSpace}.

 \noindent {\bf (iii)}   Concretely, we see below that Def. \ref{IntrinsicPhaseSpace} reproduces the traditional phase space of vacuum
 electromagnetism \eqref{TraditionalEMPhaseSpace}.

 \noindent {\bf (iv)}   What we are not discussing here is the symplectic structure on phase space, hence the Poisson brackets of
 gauge fields and eventually their quantization, this is instead the topic of the companion article \cite{SS23QuantumFluxObservables}.

\end{remark}

\begin{remark}[Stacky phase space reduction and integrated BRST]
\label{StackyPhaseSpaceReduction}
 $\,$

\noindent {\bf (i)}  The construction of the phase space in Def. \ref{FluxQuantizationLaws} as a {\it mapping stack} immediately makes it the
  correct ``reduced'' phase space of the gauge theory, in the refined stacky sense:

 \noindent {\bf (ii)}  Namely, the Hamiltonian evolutions induced by the (higher) Gau{\ss}-law constraints \eqref{GaussFaradayAmpereEquations} are (e.g. \cite[\S 1.2]{HenneauxTeitelboim92})
  the {\it gauge transformations}, translating among (higher) gauge potentials that correspond to the same flux densities. In traditional approaches,
  one is tempted to quotient out these gauge equivalences to arrive at the naive {\it reduced phase space}, but refrains from doing so due to
  the bad technical behavior of the ordinary quotient space \cite[\S 2.2.3]{HenneauxTeitelboim92}.

 \noindent {\bf (iii)}  Indeed, more properly, gauge transformations should be
  retained as isomorphisms in a smooth groupoid structure on phase space, which is instead the correct stacky {\it homotopy quotient}
  (cf. \cite[Ntn. 3.1.41, Prop. 3.2.76]{SatiSchreiber21}) by the gauge equivalences, locally. This (higher) groupoid structure may be understood (cf. \cite[\S 5.3]{AlfonsiYoung23})
  as
  the Lie integration (hence the global non-perturbative version) of the corresponding BRST-complex, taking infinitsimal ``ghost'' fields to finite (and large) gauge transformations.

 \noindent {\bf (iv)}  All of this is automatically embodied by the construction in Def. \ref{FluxQuantizationLaws}.
  For example, for $\mathcal{A} \,=\, B \mathrm{U}(1)$ the flux quantized phase space is the stack $\mathrm{U}(1)\mathrm{Bund}(X^d)_{\mathrm{conn}}$
  of principal circle-bundles with connection, which locally on a chart $U^d \hookrightarrow X^d$ is the homotopy quotient of local gauge potentials
  $\mathbf{\Omega}^1_{\mathrm{dR}}(U^d)$ by the smooth group of gauge transformations $\mathrm{Map}(U^d, \mathrm{U}(1))$.
\end{remark}

\newpage

%%%%%%%%%%%%%%%%%%%%%%%%%%%%%%%%%%%%%
\section{Examples and Applications}
\label{ExamplesAndApplications}
%%%%%%%%%%%%%%%%%%%%%%%%%%%%%%%%%%%%%

Two basic examples

\cref{MaxwellField} -- The vacuum Maxwell field

\cref{ChiralBoson} -- The chiral boson

\noindent
serve to connect the generalized formulas from \cref{FluxQuantizationOnPhaseSpace} to traditional expressions and proposals;
but even here the resulting space of choices of flux quantization laws on phase space has not received attention before (cf. \cite{SS23QuantumFluxObservables}).

\smallskip

The example

\cref{RRFields} -- The 10d SuGra RR-fields

\noindent
seems noteworthy in that the discussion in \cref{FluxQuantizationOnPhaseSpace} recovers the famous {\it Hypothesis K} of charge quantization
in (twisted) topological K-theory, but, by applying this to fields on a Cauchy surface, without the otherwise notoriously elusive duality constraint.

\smallskip

Finally, the example

\cref{TheCField} -- The 11d SuGra C-field

\noindent
is a genuinely non-abelian case, out of reach of previous flux quantization in abelian Whitehead-generalized cohomology theory.
Our main point here is to observe, again, that the {\it Hypothesis H} on its flux quantization in Cohomotopy already produces
the full phase space, without further need of a duality constraint.

\smallskip

%%%%%%%%%%%%%%%%%%%%%%%%%%%%%%%%%%%
\subsection{The vacuum Maxwell field}
\label{MaxwellField}
%%%%%%%%%%%%%%%%%%%%%%%%%%%%%%%%%%%

For a recollection of Maxwell's equations expressed in differential forms see \cite[\S 3.5 \& 7.2b]{Frankel97}: The traditional electromagnetic field ``3-vectors'' $\big(E^1, E^2, E^3\big)$ and $(B^1, B^2, B^3)$ with respect to any coordinate chart $\mathbb{R}^3 \xhookrightarrow{\iota} X^3$ of a spatial Cauchy surface dualize to electromagnetic flux densities
$$
  \iota^\ast B
    \;\defneq\;
  \tfrac{1}{2}
  \,
  B^i
  \epsilon_{i j k}
  \,
  \mathrm{d}x^j
  \wedge
  \mathrm{d}x^k
  \,,\;\;\;\;\;
  \iota^\ast E
    \;\defneq\;
  \tfrac{1}{2}
  \,
  E^i
  \epsilon_{i j k}
  \,
  \mathrm{d}x^j
  \wedge
  \mathrm{d}x^k
  \,,
$$
which in turn
combine into to the Faraday tensor $F$ according to \eqref{CanonicalDecompositionOfFluxDensities}:
\[
  \def\arraystretch{1.3}
  \begin{array}{r}
    F
    \,=\,
    B \,-\, \star E
    \\
    \star F
    \,=\,
    E \,+\, \star B
  \end{array}
  \hspace{1cm}
  \mbox{for}
  \begin{array}{l}
    B, E \,\in\, \Omega^2_{\mathrm{dR}}(X^4)_{\iota_{\partial_t} = 0}
    \,.
  \end{array}
\]

The vacuum Maxwell equations on $X^4$ in the ``premetric'' form \eqref{TheCovariantEquationsOfMotion} are (\cite[\S 80]{Cartan24}, cf. \cite[Ex. 3.8]{Freed00}\cite{HehlItinObukhov16}\cite[Rem. 2.3]{BBSS17}\cite[Def. 1.16]{LazaroiuShahbazi22}\cite[(3)]{LazaroiuShahbazi23}):
\begin{equation}
  \label{VacuumMaxwellEquations}
  \begin{array}{l}
    \differential
    \,
    F
    \,=\, 0
    \\
    \differential
    \,
    G
    \,=\, 0
    \\
    G \,=\, \star \, F
  \end{array}
  \hspace{1cm}\mbox{for}
  \begin{array}{l}
    F, \, G
    \,\in\,
    \Omega^2_{\mathrm{dR}}(X^4)
    \,.
  \end{array}
\end{equation}
The solution space \eqref{CanonicalSolutionSpace} to the ordinary Maxwell's equations is
\begin{equation}
  \label{SolutionSpaceOfMaxwellTheory}
  \mathrm{SolSpace}_{\scalebox{.7}{A-Field}}
  \;\;
  =
  \;\;
  \Big\{
    B, E\,
    \in
    \,
    \Omega^2_{\mathrm{dR}}(X^3)
    \,\big\vert\,
    \differential \, B \,=\, 0
    \,,
    \differential \, E \,=\, 0
  \Big\}
  \,,
\end{equation}
where on the right we have the historical magnetic and electric Gau{\ss} laws. The characteristic $L_\infty$-algebra
\eqref{CEOfCharacteristicLInfinityAlgebra} is hence the direct sum of {\it two} copies of the shifted line algebra (cf. \cite[Ex. 4.12]{Char}):
\[
  \mathrm{SolSpace}
  \;\simeq\;
  \Omega^2_{\mathrm{dR}}(X^3)_\closed
  \oplus
  \Omega^2_{\mathrm{dR}}(X^3)_\closed
  \;\simeq\;
  \Omega_{\mathrm{dR}}\big(
    X^3
    ;\,
    b\mathfrak{u}(1)
    \oplus
    b\mathfrak{u}(1)
  \big)_\closed
  \,,
\]
which means that the possible flux quantization laws (Def. \ref{FluxQuantizationLaws}) are given by homotopy types $\mathcal{A}$ whose rationalization is
\medskip
$$
  L^{\mathbb{Q}} \mathcal{A}
  \;\simeq\;
  B^2 \mathbb{Q}
  \times
  B^2 \mathbb{Q}
  \,.
$$

\smallskip

Now, traditional Dirac flux quantization (\cite{Dirac31}\cite[\S 2]{Alvarez85}\cite[Ex. 2.12]{Freed00}) corresponds essentially to the choice
and $\mathcal{A} \,\defneq\, B \mathrm{U}(1) \times B^2 \mathbb{Q}$
and
promotes the $B$-flux density to a cocycle in integral differential cohomology (cf. Rem. \ref{ScopeOfExamples}) while the $E$-flux density remains
essentially unconstrained, whence the flux-quantized phase space \eqref{IntrinsicPhaseSpace} in this case becomes
\begin{equation}
  \label{TraditionalEMPhaseSpace}
  \mathrm{PhaseSpace}_{
    \scalebox{.7}{A-Field}
  }^{\mathrm{Dirac}}
  \;\defneq\;
  \left\{\!\!
    \def\arraystretch{1.6}
    \begin{array}{l}
    \widehat{A}
    \,\in\,
    \Omega^2_{\mathrm{dR}}(X^3)_\closed
    \times_{B^2 \mathbb{R}}
    B \mathrm{U}(1)
    \\
    E\,
    \in
    \,
    \Omega^2_{\mathrm{dR}}(X^3)
    \end{array}
    \,\middle\vert\,
    \def\arraystretch{1.3}
    \begin{array}{l}
      \differential \, E \,=\, 0
    \end{array}
  \right\}
  \,.
\end{equation}
Here the expression in the first line is (by \cite[Prop. 9.5]{Char}, cf. e.g. \cite[2.5]{FSS13}\cite[\S 2.5]{FSS14Stacky}) equivalently the groupoid of $\mathrm{U}(1)$-principal bundles $P$ with connection $\nabla$ over $X^3$, whose morphisms are gauge transformations $g$:
\begin{equation}
  \label{GroupoidOFCircleConnectionsManifest}
  \Omega^2_{\mathrm{dR}}\big(
    X^3
  \big)_{\mathrm{clsd}}
  \underset{
    B^2 \mathbb{R}
  }{\times}
  B \mathrm{U}(1)
  \;\;
    \simeq
  \;\;
  \mathrm{U}(1)\mathbf{Bund}_{\mathrm{conn}}(X^3)
  \;\;
  =
  \;\;
  \left\{
  \begin{tikzcd}[
    column sep=10pt
  ]
    (P,\nabla)
    \ar[
      rr,
      bend left=20pt,
      "{
        g
      }",
      "{\ }"{name=s, swap}
    ]
    \ar[
      rr,
      bend right=20pt,
      "{
        g'
      }"{swap},
      "{\ }"{name=t}
    ]
    &&
    (P', \nabla')
  \end{tikzcd}
  \right\}
  \,,
\end{equation}
which is well-known to model the flux-quantized electromagnetic field (\cite{WuYang75}, review includes \cite[\S 5.5]{EGH80}\cite{RudolphSchmidt17}). As such, \eqref{TraditionalEMPhaseSpace}  coincides with the standard canonical phase space of vacuum electromagnetism in temporal gauge (cf. \cite[\S III]{Corichi98}\cite[\S 5]{BlaschkeGieres21}):
\begin{itemize}[leftmargin=.4cm]
\item The ``canonical coordinate'' is $\widehat{A} \,=\, (P,\nabla)$, typically expressed with respect to a local trivialization of $P$ over surjective submetion $Y \twoheadrightarrow X$ (such as $Y = P$) as a ``vector potential'' 1-form $A \,\in\, \Omega^1_{\mathrm{dR}}(Y)$ satisfying further conditions;
\item the
electric flux density $E$ is\footnote{We follow \cite{CattaneoPerez17}\cite[\S A.1]{SS23QuantumFluxObservables} in regarding the electric flux density $E$ as a 2-form, as befits a flux density, whereas
many authors regard it as a ``3-vector'', the Hodge dual 1-form $\star E$. It is due to this implicit dualization that the divergence operation in the traditional
Gau{\ss} law becomes the closure operation $\differential E = 0$.}
the canonical momentum to the canonical coordinate $\widehat{A}$ and its Gau{\ss} law is the constraint condition.
\end{itemize}

\begin{remark}[Varying monopole sectors and categorified symmetries]
  Traditional discussions typically focus attention on a single connected component of the groupoid \eqref{GroupoidOFCircleConnectionsManifest} of all $\mathrm{U}(1)$-connections, hence fix the gauge equivalence class of one background connection, or at least fix the class of the underlying $\mathrm{U}(1)$-principal bundle. This is the perspective of {\it perturbation theory} where the global topological structure of the fields is fixed and only small field perturbations about these backgrounds are considered. In contrast, the phase spaces obtained here are automatically non-perturbative in that they reflect all topological ``monopole sectors'' at once.

  This is relevant: For instance there are global symmetries of vacuum electromagnetism given by ``shifting'' (namely: tensoring) the global electromagnetic field $\widehat{A} = (P,\nabla)$ by flat bundles $\widehat{A}_{0} = (P_0, \nabla_0)$, i.e. with $F_{\nabla_0} = 0$, which in turn have ``higher global symmetries''\footnote{The 2-group of endofunctors in \eqref{HigherShiftSymmetryOfEMField} makes precise the idea of the ``higher form symmetry'' of vacuum electromagnetic considered in \cite[\S 4.1]{GKSW15}.}
  between them, given by tensoring with gauge transformations $g : \widehat{A}_0 \to \widehat{A}'_0$, forming a 2-group of 2-automorphisms (endofunctors and natural transformations, cf. \cite[\S. 8.1]{BaezLauda04}) of the groupoid \eqref{GroupoidOFCircleConnectionsManifest}, as shown in the following diagram:

  \vspace{-.6cm}
  \begin{equation}
    \label{HigherShiftSymmetryOfEMField}
    \begin{tikzcd}
      \mathrm{U}(1)
      \mathbf{Bund}_{\mathrm{conn}}(X^3)
      \ar[
        rr,
        bend left=15,
        "{
          (-)\otimes
          \widehat{A}_0
        }",
        "{\ }"{swap, name=s}
      ]
      \ar[
        rr,
        bend right=15,
        "{
          (-)\otimes
          \widehat{A}'_0
        }"{swap},
        "{\ }"{name=t}
      ]
      \ar[
        from=s,
        to=t,
        Rightarrow,
        "{
          (-)
            \otimes
          g
        }"{description, pos=.4}
      ]
      &&
      \mathrm{U}(1)
      \mathbf{Bund}_{\mathrm{conn}}(X^3)
    \end{tikzcd}
  \end{equation}
  In the traditional perspective on the EM-phase space as reflecting only a single topopological sector of the gauge field, this 2-group of higher symmetries is in general not realizable, since the group operations $(-) \otimes \widehat{A}_0$ in general shift the class of the underlying $\mathrm{U}(1)$-principal bundle, $[P \otimes P_0] \,=\, [P] + [P_0] \,\in\, H^2(X^3; \mathbb{Z})$, by a torsion element $[P_0]$.

  This example serves to amplify the relevance of the full non-perturbative phase space of a (higher) gauge theory, including all topological field sectors, which is produced by Def. \ref{IntrinsicPhaseSpace}.
\end{remark}

\smallskip

So far, all this pertains to the choice of flux quantization of the electromagnetic field that used to be the usual one essentially since \cite{Dirac31}, where the magnetic flux is quantized but the electric flux is left essentially unconstrained.
However, one may consider other flux quantization laws for Maxwell theory on phase space:
Symmetry may suggest to subject the electric flux density $E$ to the same quantization law as the magnetic flux density, hence to choose
$\mathcal{A} \,=\, B\mathrm{U}(1) \times B\mathrm{U}(1)$. Via Rem. \ref{ScopeOfExamples} (\cite[Prop. 9.5]{Char}) this is identified with the choice made
in \cite{FMS07a}\cite{FMS07b} \cite[Rem. 2.3]{BBSS17}\cite[Def. 4.1]{LazaroiuShahbazi22}\cite[Def. 4.3]{LazaroiuShahbazi23}:
\begin{equation}
  \label{SymmetricFluxQuantizationofMaxwellTheory}
  \mathrm{PhaseSpace}_{
    \scalebox{.7}{A-Field}
  }^{\mathrm{FMS}}
  \;\defneq\;
  \left\{\!\!
    \def\arraystretch{1.6}
    \begin{array}{l}
    \widehat{A}
    \,\in\,
    \Omega^2_{\mathrm{dR}}(X^3)_\closed
    \times_{B^2 \mathbb{R}}
    B \mathrm{U}(1)
    \\
    \widehat{A}'
    \,
    \in
    \,
    \Omega^2_{\mathrm{dR}}(X^3)_\closed
    \times_{B^2 \mathbb{R}}
    B \mathrm{U}(1)
    \end{array}
  \!\! \right\}
  \,.
\end{equation}
(The focus of \cite{FMS07a}\cite{FMS07b} is on a lift of the canonical Poisson bracket from \cref{TraditionalEMPhaseSpace} to \eqref{SymmetricFluxQuantizationofMaxwellTheory}. Here we disregard Poisson brackets, but see the companion discussion in \cite{SS23QuantumFluxObservables}.)

\begin{remark}[Duality covariance]
  While the flux-quantization choice \eqref{SymmetricFluxQuantizationofMaxwellTheory} is duality-sym-metric in that it treats magnetic and electric flux in the same way, it is not in itself duality {\it covariant}, so to say, in that it relies on a {\it choice} of what counts as magnetic and what counts as electric flux: More generally this choice could be made locally only, with suitable transition functions mixing electromagnetic flux densities relating different local choices, a situation that is discussed in some detail in \cite{LS18}\cite{LazaroiuShahbazi22}\cite{LazaroiuShahbazi23}, there thought of as a special case of {\it U-duality covariance} in supergravity theory.

  In terms of the abstract construction of phase spaces in \cref{FluxQuantizationAndPhaseSpace}, such local duality covariance is described by generalizing the classifying space $\mathcal{A}$ in Def. \ref{FluxQuantizationLaws} to a {\it fiber bundle} of classifying spaces (a ``local coefficient bundle''), thereby generalizing the (differential) cohomology theory that it classifies to a {\it twisted} (differential) cohomology theory. This is discussed in detail in \cite[\S 3, \S 11]{Char}, whereby our basic argument here straightforwardly generalizes to this case; but for brevity we will not further dwell on this point here.
\end{remark}

\smallskip

Even so, we wish to highlight that there are yet other possible flux quantization laws \cite[\S 2]{SS23QuantumFluxObservables}, whose torsion contributions need not actually respect the apparent electromagnetic duality. For instance,
for a finite group $K \to \mathbb{Z}_2$, we may choose
\[
  \mathrm{PhaseSpace}_{
    \scalebox{.7}{A-Field}
  }^{\mathrm{SS}}
  \;\defneq\;
  \bigg\{
  {\widehat{A}_{\mathrm{EM}}}
  \,\in\,
  \big(
    \Omega^2_{\mathrm{dR}}(X^3)_\closed
  \big)^2
  \times_{
    (B^2 \mathbb{R})^2
  }
  B\Big(
    \mathrm{U}(1)
    \rtimes
    \big(K \times \mathrm{U}(1)\big)
  \Big)
  \bigg\}
  \,.
\]
which leads to possibly non-trivial commutators between magnetic and ``large'' electric fluxes.

(This freedom of choosing ``global'' non-abelian structure even in abelian Yang-Mills theory has also been observed, from a different angle, in \cite{LazaroiuShahbazi22}.)

\smallskip

Which of these flux quantization laws really applies to the observable world is a phenomenological question that has received
almost no attention yet. (One exception is \cite{KitaevMooreWalker07} which claimed to discuss a potential tabletop
experiment that might check the law \eqref{SymmetricFluxQuantizationofMaxwellTheory}.) But analogous choices of flux quantization laws
need to be made also for any higher gauge field in hypothetical fundamental theories such as higher dimensional supergravity
(e.g., the RR-fields \cref{RRFields} or the C-field \cref{TheCField}), where they have more discernible effects, at least in theory.

\medskip

%%%%%%%%%%%%%%%%%%%%%%%%%%%%%%%%%%%%
\subsection{Chiral boson in 2d}
\label{ChiralBoson}
%%%%%%%%%%%%%%%%%%%%%%%%%%%%%%%%%%%%

This example is again elementary but instructive. In usual Lagrangian approaches, these self-dual fields are a source of seemingly
endless complications (see the discussion and pointers in \cite{Sen20}) and yet thought to be of profound relevance (cf., e.g., \cite{Witten97}).

\medskip

The higher Maxwell equations \eqref{TheCovariantEquationsOfMotion} for the ``chiral boson'' \cite{GirottiGomesRivelles92} on
$X^2 = \mathbb{R}^{0,1} \times S^1$ (in this case really: ``lower Maxwell equations'' for a ``right-moving scalar'' in 2d)  are :
\begin{equation}
  \label{EOMsForChiralBosonIn2d}
  \def\arraystretch{1.4}
  \begin{array}{l}
    \differential \, F \,=\, 0
    \\
    \star \, F \,=\, F
  \end{array}
  \hspace{1cm}
  \mbox{for}
  \;
  F \,\in\,
  \Omega^1_{\mathrm{dR}}(X^2)
  \,.
\end{equation}
The canonical decomposition \eqref{CanonicalDecompositionOfFluxDensities} of the flux density 1-form is
\[
  \def\arraystretch{1.5}
  \def\arraycolsep{10pt}
  \begin{array}{cll}
  &
  F
  \,=\,
  B + \star \, E
  \,,
  \hspace{1cm}
  &
  E, \, B
  \,\in\,
  \Omega^1_{\mathrm{dR}}(X^2)_{\iota_{\partial_t} = 0}
  \\
  \Leftrightarrow
  &
  F
  \,=\,
  b \, \differential x
  +
  e \, \differential t
  \,,
  &
  e, \, b
  \,\in\,
  \Omega^0_{\mathrm{dR}}(X^2)
  \,.
  \end{array}
\]
For the record, we note that the Faraday-Amp{\`e}re law \eqref{GaussFaradayAmpereEquations}
is
\[
  \differential_t
  \,
  B
  \,=\,
  -
  \differential_x \, E
  \;\;\;
  \Leftrightarrow
  \;\;\;
  \partial_t \, b
  \,=\,
  \partial_x \, e
  \;\;\;
  \underset{
    \mathclap{
      \scalebox{.7}{s.d.}
    }
  }{\Leftrightarrow}
  \;\;\;
  (\partial_t - \partial_x) b
  \,=\,
  0\,,
\]
(where in the last step we inserted the constitutive equation $\star F = F \;\Leftrightarrow\; e = b$) exhibiting the field as purely a ``right-mover'' (whence: ``chiral'').

However, the key point for us is that
the solution space \eqref{CanonicalSolutionSpace} is
\begin{equation}
  \label{SolutionSpaceOfChiralBoson}
  \mathrm{SolSpace}_{\scalebox{.65}{$\rchi$Bos}}
  \;\simeq\;
  \Big\{
    B
    \,=\,
    b \,\differential x
    \,\in\,
    \Omega^1_{\mathrm{dR}}(S^1)
  \Big\}
  \;\simeq\;
  \Omega_{\mathrm{dR}}\big(
    S^1
    ;\,
    b\mathfrak{u}(1)
  \big)_\closed
  \,,
\end{equation}
identifying (a {\it single} copy of) the line Lie algebra as the characteristic $L_\infty$-algebra \eqref{SolutionSpaceViaLInfinity}, so that the canonical choice
of flux quantization law (Def. \ref{FluxQuantizationLaws}) is $\mathcal{A} \defneq \mathrm{U}(1)$. By Rem. \ref{ScopeOfExamples}
(\cite[Prop. 9.5]{Char}) this makes the flux-quantized phase space (Def. \ref{IntrinsicPhaseSpace})
of the chiral boson be the degree=1 integral differential cohomology of the circle
(cf. \cite[pp. 13]{FMS07b}):
\[
  \mathrm{PhaseSpace}^{\mathrm{FMS}}_{\scalebox{.65}{$\rchi$Bos}}
  \;\defneq\;
  \Big\{
  A \,\in\,
  \Omega^1_{\mathrm{dR}}(S^1)
  \times_{B^2 \mathbb{R}}
  B\mathrm{U}(1)
  \Big\}
  \,.
\]
This coincides with the statement on \cite[p. 32]{FMS07b} (notice that there the focus is on the Poisson brackets on this phase space,
which is not our concern here, but cf. \cite{SS23QuantumFluxObservables}).

\smallskip

\begin{remark}
This highlights how in our duality-symmetric formulation {\it every} higher gauge theory (of Maxwell type) is regarded as ``generalized self-dual'' with the actual self-dual field theories subsumed as those whose duality-symmetric fields are {\it not doubled}, as in \eqref{SolutionSpaceOfChiralBoson} and in contrast to \eqref{SolutionSpaceOfMaxwellTheory}.
\end{remark}

\newpage

%%%%%%%%%%%%%%%%%%%%%%%%%%%%%%%%%%%%%
\subsection{The RR-field and Hypothesis K}
\label{RRFields}
%%%%%%%%%%%%%%%%%%%%%%%%%%%%%%%%%%%%

On a spacetime of the form
$
  X^{10}
  \;=\;
  \mathbb{R}^{0,1}
  \times
  X^{9}
$,
consider NS-fields and RR-fields with their canonical decomposition \eqref{CanonicalDecompositionOfFluxDensities}:
$$
  \begin{array}{l}
  H_3
  \;=\;
  H^{\mathrm{mag}}_3
  \,+\,
  \star
  H_7^{\mathrm{el}}
  \,,
  \hspace{1cm}
  \mbox{for} \;
  \left\{
  \def\arraystretch{1.3}
  \begin{array}{l}
    H_3^{\mathrm{mag}}
    \,\in\,
    \Omega^3_{\mathrm{dR}}(X^{10})_{\iota_{\partial_t}=0}
    \\
    H^3_{\mathrm{el}}
    \,\in\,
    \Omega^7_{\mathrm{dR}}(X^{10})_{\iota_{\partial_t}=0}
  \end{array}
  \right.
  \\[20pt]
  H_7
  \;=\;
  H^{\mathrm{mag}}_7
  \,+\,
  \star
  H_3^{\mathrm{el}}
  \,,
  \hspace{1cm}
  \mbox{for}\;
  \left\{
  \def\arraystretch{1.3}
  \begin{array}{l}
    H_7^{\mathrm{mag}}
    \,\in\,
    \Omega^7_{\mathrm{dR}}(X^{10})_{\iota_{\partial_t}=0}
    \\
    H_3^{\mathrm{el}}
    \,\in\,
    \Omega^7_{\mathrm{dR}}(X^{10})_{\iota_{\partial_t}=0}
  \end{array}
  \right.
  \\[20pt]
  \big(
  F_{2k + \sigma}
  \;=\;
  B_{2k+\sigma}
  \,-\,
  \star
  \,
  B_{10-2k- \sigma}
  \big)_{1 \leq 2k+\sigma \leq 9}
  \,,
  \hspace{.3cm}
  \mbox{for} \;\;
  \Big\{
    B_{2k+\sigma}
    \,\in\,
    \Omega^{2k+\sigma}_{\mathrm{dR}}(X^{10})_{\iota_{\partial_t} = 0} \;,
  \end{array}
$$
where
\bigskip
$$
  \sigma
  \;=\;
  \left\{
  \def\arraystretch{1.2}
  \begin{array}{lcl}
    0 &\vert& \mbox{type IIA},
    \\
    1 &\vert& \mbox{type IIB}.
  \end{array}
  \right.
$$
The higher Maxwell equations for the NS/RR-fields in the ``duality symmetric'' or ``democratic''
form \eqref{TheCovariantEquationsOfMotion} are  (\cite[\S 3]{CJLP98}, cf. \cite[p. 3]{GradySati22}\cite[(6,7)]{MkrtchyanValach23}):
\smallskip
$$
  \def\arraystretch{1.3}
  \begin{array}{l}
    \differential
    \,
    H_3 \,=\, 0
    \,,
    \;\;\;
    \differential
    \,
    H_7 \,=\, 0
    \\
    \differential
    \,
    F_{2k+\sigma}
    \,=\,
    H_3
    \wedge
    F_{2k + \sigma - 2}
    \\
    \mathrlap{
      \star H_3 \,=\, H_7
    }
      \\
    \mathrlap{
      \star F_{2k+\sigma}
      \,=\
      F_{10- 2k -\sigma},
    }
  \end{array}
$$
where we understand that $F_{2k+\sigma} \defneq 0$ if the index is $< 1$ or $> 9$.

In our systematics,
the now traditional statement \cite{MinasianMoore97} that this points towards flux quantization in topological K-theory
(rather: twisted K-theory \cite{BouwknegtMathai01}\cite{MathaiSati}) comes about because \eqref{CanonicalSolutionSpace} is characterized
(in the sense of Rem. \ref{ModuliProblemOfSolutions}) by the Whitehead $L_\infty$-algebra of
the homotopy quotient of the classifying space for complex K-theory in degree $\sigma$ by the 2-group $B \mathrm{U}(1)$
\cite[Def. 4.10]{FSS17RationalSphere}\cite[Def. 4.6]{FSS18TDuality}\cite[Prop. 10.1]{Char}:
\smallskip
$$
  \def\arraystretch{1.9}
  \begin{array}{rcl}
  \mathrm{SolSpace}_{\scalebox{.65}{NS/RR-Field}}
  &\simeq&
  \left\{
   \def\arraystretch{1.3}
  \begin{array}{l}
  H^{\mathrm{mag}}_3
  \,\in\,
  \Omega^3_{\mathrm{dR}}(X^9)
  \\
  H^{\mathrm{mag}}_7
  \,\in\,
  \Omega^7_{\mathrm{dR}}(X^9)
  \\
  \big(
  B_{2k+\sigma}
  \,\in\,
  \Omega^{2k+\sigma}_{\mathrm{dR}}(X^{9})
  \big)_{
    1 \leq 2k+\sigma \leq 9
  }
  \end{array}
  \,\middle\vert\,
  \def\arraystretch{1.3}
  \begin{array}{l}
    \differential
    \,
    H_3
    \,=\,0
    ,\;\;
    \differential
    \,
    H_7
    \,=\,0
    \\
    \differential \,
    B_{2k+\sigma} \,=\,
    H_3
    \wedge
    B_{2k + \sigma - 2}
  \end{array}
  \right\}
  \\
  &\simeq&
  \Omega_{\mathrm{dR}}\Big(
    X^9
    ;\,
    \mathfrak{l}
    \big(
      (\Omega^\infty \Sigma^\sigma \mathrm{KU}) \!\sslash\! B \mathrm{U}(1)
    \big)
    \oplus
    \mathfrak{l}B^6\mathrm{U}(1)
  \Big)_\closed.
  \end{array}
$$

\medskip
\noindent Hence if we apply flux quantization for RR-fields on phase space according to Def. \ref{FluxQuantizationLaws}, then the flux quantization
law may be taken to be (twisted) complex K-theory as in the now traditional ``Hypothesis K'' (to use the term from \cite[Rem. 4.1]{SS23AnyonicDefect}
for the flux quantization hypothesis due to \cite{MooreWitten00}\cite{FreedHopkins00}\cite{Freed00}, for more see \cite{GradySati22})
but without further duality constraint, in higher but otherwise immediate analogy to the transparent case of the chiral boson (\cref{ChiralBoson}).

\smallskip
In contrast, in the discussion of flux quantization of RR-fields not on Cauchy surfaces but on spacetime, a more intricate-looking duality constraint
has been considered (\cite[\S 3]{MooreWitten00}\cite[Def. 6]{DistlerFreedMoore11}).

%%%%%%%%%%%%%%%%%%%%%%%%%%%%%%%%%%%%%%%%%%%%
\subsection{The C-field and Hypothesis H}
\label{TheCField}
%%%%%%%%%%%%%%%%%%%%%%%%%%%%%%%%%%%%%%%%%%%%

On a spacetime of the form
$
  X^{11}
  \,=\,
  \mathbb{R}^{0,1}
  \times
  X^{10}
$
consider the 11d supergravity C-field with its canonical decomposition \eqref{CanonicalDecompositionOfFluxDensities}:
$$
  \begin{array}{l}
  G_4
  \;=\;
  B_4
    \,-\,
  \star
  E_7
  ,
  \hspace{1cm}
  \mbox{for}
  \left\lbrace
  \def\arraystretch{1.2}
  \begin{array}{l}
    B_4 \,\in\,
    \Omega^4_{\mathrm{dR}}(X^{11})_{\iota_{\partial_t} = 0 }
    \\
    E_7 \,\in\,
    \Omega^7_{\mathrm{dR}}(X^{11})_{\iota_{\partial_t} = 0 }
  \end{array}
  \right.
  \\[20pt]
  G_7
  \;=\;
  B_7
    \,-\,
  \star
  E_4
  ,
  \hspace{1cm}
  \mbox{for}
  \left\lbrace
  \def\arraystretch{1.2}
  \begin{array}{l}
    B_7 \,\in\,
    \Omega^7_{\mathrm{dR}}(X^{11})_{\iota_{\partial_t} = 0 }
    \\
    E_4 \,\in\,
    \Omega^4_{\mathrm{dR}}(X^{11})_{\iota_{\partial_t} = 0 }
    \,.
  \end{array}
  \right.
  \end{array}
$$
The higher Maxwell equations of 11d supergravity in their ``duality-symmetric'' form
\eqref{TheCovariantEquationsOfMotion} \cite{BandosBerkovitsSorokin98}\cite[\S 2]{CJLP98}\cite[\S 2]{BNS04} are:
\begin{equation}
  \def\arraystretch{1}
  \def\arraycolsep{20pt}
  \begin{array}{rcl}
  \differential
  \,
  G_4
  \,=\,
  0
  &
  \Leftrightarrow
  &
  \left\{\hspace{-6mm}
  \def\arraystretch{1.4}
  \begin{array}{c}
    \overset{
      \mathclap{
      \raisebox{3pt}{
        \scalebox{.7}{
          \color{gray}
          magnetic Gau{\ss} law
        }
      }
      }
    }{
      \differential_s \, B_4
      \,=\,
      0
    }
    \\
    \underset{
      \mathclap{
        \scalebox{.7}{
          \color{gray}
          magnetic evolution
        }
      }
    }{
    \differential_t \,  B_4
    \,=\,
    \differential_s \star E_7
    }
  \end{array}
  \right.
  \\[30pt]
  \differential
  \,
  G_7
  \,=\,
  -\tfrac{1}{2}G_4 \wedge G_4
  &\Leftrightarrow&
  \left\{\hspace{-6mm}
  \def\arraystretch{1.4}
  \begin{array}{c}
    \overset{
      \mathclap{
        \scalebox{.7}{
          \color{gray}
          electric Gau{\ss} law
        }
      }
    }{
    \differential_s
    \,
    B_7
    \,=\,
    -\tfrac{1}{2}
    B_4 \wedge B_4
    }
    \\
    \underset{
      \scalebox{.7}{
        \color{gray}
        electric evolution
      }
    }{
    \differential_t \,  B_7
    \,=\,
    \differential_s \star E_4
    +
    B_4 \wedge \star E_7\;.
    }
  \end{array}
  \right.
  \\[30pt]
  \star \, G_4 \,=\, G_7
  &\Leftrightarrow&
  \left\{
  \def\arraystretch{1.2}
  \begin{array}{l}
    E_4 \,=\, B_4
    \\
    \underset{
      \mathclap{
        \raisebox{-3pt}{
        \scalebox{.7}{
          \color{gray}
          \bf
          duality relation
        }
        }
      }
    }{
    B_7 \,=\, - E_7
    }
  \end{array}
  \right.
  \end{array}
\end{equation}
It may be instructive to explicitly check Lem. \ref{GaussLawIsFirstClassConstraint} in this case:
\[
  \def\arraystretch{1.4}
  \begin{array}{rl}
    \differential_t
    \,
    \big(
    \differential_s
    \,
    B_4
    \big)
    &
    =\,
    -
    \differential_s
    \,
    \big(
    \differential_t
    \,
    B_4
    \big)
    \\[-2pt]
  &  =\;
    -
    \differential_s
    \,
    \differential_s \star E_7
    \\[-2pt]
  &  =\;
    0\;,
  \\[2pt]
   \differential_t
   \,
   \big(
     \differential_s
     \,
     B_7
     +
     \tfrac{1}{2}
     B_4 \wedge B_4
   \big)
   &
   =\;
   -
   \differential_s
   \,
   \differential_t
   \,
   B_7
   +
   (\differential_t \, B_4)
   \wedge
   B_4
   \\
  & =\;
   -
   \differential_s
   \big(
     B_4 \wedge
     \star E_7
   \big)
   +
   (\differential_s \star E_7)
   \wedge
   B_4
   \\[-2pt]
  & =\; 0
   \,.
 \end{array}
\]
Remarkably, the solution space \eqref{CanonicalSolutionSpace} of the C-field
\begin{equation}
  \def\arraystretch{1.6}
  \begin{array}{rcl}
  \mathrm{SolSpace}_{
    \scalebox{.7}{C-Field}
  }
  &\simeq&
  \left\{
  \def\arraystretch{1.2}
  \begin{array}{l}
    B_4
      \,\in\,
    \Omega^4_{\mathrm{dR}}(X^{10})
    \\
    B_7 \,\in\,
    \Omega^7_{\mathrm{dR}}(X^{10})
  \end{array}
  \middle\vert
  \def\arraystretch{1.2}
  \begin{array}{l}
    \differential \, B_4 \,=\, 0
    \\
    \differential \, B_7 \,=\,
    -
    \tfrac{1}{2}
    B_4 \wedge B_4
  \end{array}
  \right\}
  \\
  &\simeq&
  \Omega_{\mathrm{dR}}\big(
    X^{10}
    ,\
    \mathfrak{l}S^4
  \big)_\closed
    \end{array}
\end{equation}
is characterized (in the sense of Rem. \ref{ModuliProblemOfSolutions}) by the ``M-theory gauge algebra'' \cite[(2.5)]{CJLP98}\cite[\S 4]{Sati10}\cite[\S 2.2]{SatiVoronov22b}
which happens to be
the Whitehead $L_\infty$-algebra of the 4-sphere (cf. \cite[Ex. 5.3]{Char}), an observation due to \cite[\S 2.5]{Sati13}, see also \cite[\S 2]{FSS17RationalSphere}:
$$
  \mathfrak{l}S^4
    =
  \mathbb{R}\langle
    v_3, \,v_6
  \rangle
  \quad
  \mbox{with}
  \quad
  [v_3, v_3] = v_6
  \,.
$$

It follows that the homotopy type of the 4-sphere is the classifying space for an admissible flux quantization law (Def. \ref{FluxQuantizationLaws}) of
the C-field on phase space (without a further duality constraint). The non-abelian cohomology theory classified by the 4-sphere is known as
4-{\it Cohomotopy}, i.e., unstable/nonabelian Cohomotopy, as in the original form of Pontrjagin's theorem (the origin of the now more famous
Pontrjagin-Thom theorem), identifying it with unstable framed cobordism, cf. \cite[\S 2.1 \& 2.2]{SS23Mf}.

\medskip
The evident hypothesis that this choice is the correct flux quantization law in M-theory is essentially what we called ``Hypothesis H'' \cite{FSS19TwistedCohomotopy}\cite{SS20Orientifold}\cite{FSS21HopfWZInHypothesisH}\cite{FSS21TwistedStringInHypothesisH}\cite{SS21M5Anomaly}\cite{FSS22TwistorialCohomotopy} \cite{SS23Mf}.
More precisely, Hypothesis H postulates C-field flux quantization by a {\it tangentially twisted} form of 4-Cohomotopy (not further discussed here for the sake of
brevity, but see the survey in \cite[\S 12]{Char}), which turns out to capture subtle topological effects such as the famous ``shifted'' flux quantization of
the C-field from \cite{Witten97a} on spacetimes whose tangent bundle has a non-trivial fractional Pontrjagin class $\tfrac{1}{2}p_1$ (\cite[Prop. 3.13]{FSS19TwistedCohomotopy},
a previously enigmatic phenomenon which was the key motivation in \cite{HopkinsSinger05InChargeQuantization} for introducing the notion of generalized
differential cohomology in the first place).

\medskip

The point that the present discussion adds to this picture is the observation that the corresponding moduli of (unstable) differential 4-Cohomotopy
\cite{FSS15}\cite[Ex. 9.3]{Char} of any Cauchy surface $X^{10}$ is already the phase space (Def. \ref{IntrinsicPhaseSpace}) of the cohomotopically
flux-quantized C-field, without the need to impose a further duality constraint:
\medskip
\begin{equation}
  \label{CohomotopicalPhaseSpace}
  \mathrm{PhaseSpace}^{\mathrm{FSS}}_{
    \scalebox{.7}{C-Field}
  }
  \;=\;
  \left\{
    C
      \,\in\,
    \overset{
      \mathclap{
        \raisebox{3pt}{
          \scalebox{.7}{
            \color{gray}
            \bf
            canonical differential 4-cohomotopy
          }
        }
      }
    }{
    \Omega_{\mathrm{dR}}\big(
      X^{10}
      ;\,
      \mathfrak{l}S^4
    \big)_{\closed}
      \quad \;
      \underset{
        \mathclap{ \!\!\!\!\!\!\!\!\!\!
          L^{\mathbb{R}}S^4(X^{10})
        }
        \hspace{-20pt}
      }{\times}
      \quad \;
      S^4(X^{10})
      }
  \right\}
  \,.
\end{equation}

\smallskip

In fact, in this case, at least, this flux quantization naturaly lifts form the phase space to all of spacetime, or rather to {\it super}-spacetime: This is the content of \cite{GGS24}.

\begin{remark}[Double dimensional reduction of the C-field]
  \label{DoubleDimensionalReductionOfTheCField}
  Many examples of higher gauge fields naturally arise via double dimensional KK-reduction (reducing both the spacetime dimension as well as higher gauge field degrees) of examples in higher dimensions, and specifically of the C-field in 11-dimensions (cf. \cite[Ex. 2.5]{SS24Flux}). This is traditionally discussed at the level of differential forms, but at least if the fiber space $F$ has group structure then double dimensional reduction applies also to flux quantization laws $\mathcal{A}$, namely by forming the ``cyclification'' $\mathrm{Cyc}_F(\mathcal{A}) \,:=\, \mathrm{Map}(F,A)\sslash F$ of the classifying space $\mathcal{A}$ \cite[\S 2.2]{BMSS19}\cite{SS24Cyc}. In this fashion flux quantization laws of all the previous examples are induced by flux quantization laws of the C-field. Since on the underlying flux densities this process produces the usual U-duality groups \cite{SatiVoronov22b}\cite{SV23RHT}, it follows that cyclification of flux quantization laws necessarily yields a form of U-duality covariant flux quantization. This deserves to be further discussed elsewhere.
\end{remark}

\medskip

\noindent
{\bf Outlook -- Fundamental M-branes from cohomotopical C-field flux quantization.}
For spacetimes modelling the vicinity of flat solitonic branes of small codimension, the cohomotopical phase space \eqref{CohomotopicalPhaseSpace}
of the C-field is shape-equivalent\footnote{A map of smooth $\infty$-stacks is a {\it shape-equivalence} if it induces a weak homotopy equivalence
under passage to topological realizations (cf. \cite[Prop. 1.26]{Char}\cite[Def. 3.1]{SS20Orbi}).} to a configuration space manifold of
points -- namely of the positions of these branes in their transverse space \cite[\S 2]{SS22Config}, which serves as an atlas for the phase
space stack. With topological quantum observables of flux-quantized phase spaces conceived as in the companion article \cite{SS23QuantumFluxObservables},
Hypothesis H hence implies a rich structure of quantum observables on at least certain M-brane configurations. In \cite[\S 4]{SS22Config}
these quantum observables are matched to a wealth of structures expected in the string/M-theory literature, in particular to quantum states
of transverse M2/M5-branes as seen in the BMN matrix model \cite[\S 4.9]{SS23QuantumFluxObservables}\cite{CSS23Chord}.

\smallskip
A question that had been left open in \cite{SS23QuantumFluxObservables} is whether (or why not) the purely cohomotopical structures considered
there would (not) need to be subjected to a metric duality constraint in order to become ``physical''. We suggest that the result presented here
may be understood as answering this question (cf. also \cite{GGS24}).

\newpage

\appendix

%%%%%%%%%%%%%%%%%%%%%%%%%%%%%%%%%%%%%%%
\section{Appendix: Computations}
\label{Computations}
%%%%%%%%%%%%%%%%%%%%%%%%%%%%%%%%%%%%%%

The proof of Lem. \ref{GaussLawIsFirstClassConstraint} is quite elementary and immediate --- thanks to the duality-symmetric form of
the higher Maxwell equations \eqref{TheCovariantEquationsOfMotion} --- but for the record we spell it out in detail:

\medskip

\noindent The key point is that with the canonical decomposition \eqref{CanonicalDecompositionOfFluxDensities}
\[
  F^{(i)}
  \,=\,
  B^{(i)} \,+\, \star E^{(i)}
  \,,
  \hspace{1cm}
  \mbox{for}
  \left\{
  \def\arraystretch{1.5}
  \begin{array}{l}
    B^{(i)} \,\in\,
    \Omega^{\mathrm{deg}_i}_{\mathrm{dR}}\big(
      X^D
    \big)_{\iota_{\partial_t} = 0}
    \\
    E^{(i)} \,\in\,
    \Omega^{D-\mathrm{deg}_i}_{\mathrm{dR}}\big(
      X^D
    \big)_{\iota_{\partial_t} = 0}
  \end{array}
  \right.
\]
the terms $\star \vec E$ have vanishing wedge product with each other since they are all proportional to differential $\differential t$
of the temporal coordinate function globally given by \eqref{FoliationByCauchySurfaces}.

Therefore, any graded-symmetric polynomial function $\vec P$ of $\vec F$ is the sum of that same polynomial function of $\vec B$ with
the result of
iteratively replacing in this polynomial every one factor of $B^{(i)}$ by $\star E^{(i)}$.
The latter may be written as the sum of graded partial derivatives $\frac{\delta}{\delta B^{(i)}}$ of the polynomial times $\star E^{(i)}$:
\begin{equation}
  \label{ExpandingOutPolynomial}
  \vec P\big(
    \vec F
  \big)
  \;=\;
  \vec P\big(
    \vec B
  \big)
  \,+\,
  \big(
    \star E^{(i)}
  \big)
  \wedge
  \frac{\delta}{\delta B^{(i)}}
  \vec P\big(
    \vec B
  \big)
  \,.
\end{equation}
For example, in the situation of \cref{TheCField}:
\[
  \tfrac{1}{2}
  F_4 \wedge F_4
  \;=\;
  \tfrac{1}{2}
  \big(
    B_4 + \star E_7
  \big)
  \wedge
  \big(
    B_4 + \star E_7
  \big)
  \;=\;
  \tfrac{1}{2} B_4 \wedge B_4
  \,+\,
  (\star E_7) \wedge B_4
  \,.
\]

\medskip

Similarly, in the differential of $\vec F$ we collect terms with and without a factor of $\differential t$:
\begin{equation}
  \label{DifferentialF}
    \differential \, \vec F
    \;=\;
    \big(
      \differential_s
      +
      \differential_t
    \big)
    \big(
      \vec B \,+\, \star \vec E
    \big)
    \;=\;
      \differential_s \, B
    \,+\,
    \big(
      \differential_t \, B
      \,+\,
      \differential_s \, \star E
    \big)
    \,.
\end{equation}
Now equating \eqref{DifferentialF} with \eqref{ExpandingOutPolynomial} and matching temporal/nontemporal components
already yields the Gau{\ss}- and the Faraday-Amp{\`e}re law in \eqref{GaussFaradayAmpereEquations}.
Similarly for their integrability condition:
\begin{equation}
  \label{GettingTheIntegrabilityCondition}
  \def\arraystretch{1.6}
  \begin{array}{l}
    \mathllap{
      0 \,=\;\,
    }
    \differential
    \,
    \vec P\big(
      \vec F
    \big)
    \\
    \;=\;
    \big(
      \differential_s
      \,+\,
      \differential_t
    \big)
    \Big(
      \vec P\big( \vec B \big)
      \,+\,
      \big(\star E^{(i)}\big)
      \wedge
      \frac{\delta}{\delta B^{(i)}}
      \vec P\big(\vec B\big)
    \Big)
    \\
    \;=\;
    \differential_s
    \,
    \vec P\big(
      \vec B
    \big)
    \,+\,
    \bigg(
      \differential_t
      \,
      \vec P\big(
        \vec B
      \big)
      \,+\,
      \differential_s
      \Big(
      \big(\star E^{(i)}\big)
      \wedge
      \frac{\delta}{\delta B^{(i)}}
      \vec P\big(\vec B\big)
      \Big)
    \bigg).
  \end{array}
\end{equation}

\medskip

With this, we may spell out the derivation in the proof of Lem. \ref{GaussLawIsFirstClassConstraint}:
\begin{equation}
  \def\arraystretch{1.8}
  \begin{array}{ll}
    \differential_t
    \Big(
      \differential_s
      \,
      \vec B
      -
      \vec P\big(\vec B\big)
    \Big)
    \\
    \;=\;
    -
    \differential_s
    \,
    \differential_t
    \,
    \vec B
    \,-\,
    \differential_t\,
    \vec P\big(
      \vec B
    \big)
    \\
    \;=\;
    -\differential_s
    \Big(
      -
      \differential_s \star E
      +
      \big(
        \star E^{(i)}
      \big)
      \wedge
      \frac{\delta}{\delta B^{(i)}}
      \vec P\big(
        \vec B
      \big)
    \Big)
    \,+\,
    \differential_s
    \Big(
      \big(
        \star E^{(i)}
      \big)
      \wedge
      \frac{\delta}{\delta B^{(i)}}
      \vec P\big(
        \vec B
      \big)
    \Big)
    \\
    \;=\;
    \underbrace{
    \differential_s \,
    \differential_s
    \,
    \star E
    \mathclap{
      \phantom{\vert_{\vert_{\vert_{\vert}}}}
    }
    }_{= 0}
    \;
    \underbrace{
    \,-\;
    \differential_s
    \Big(
      \big(
        \star E^{(i)}
      \big)
      \wedge
      \textstyle{
        \frac{\delta}{\delta B^{(i)}}
      }
      \vec P\big(
        \vec B
      \big)
    \Big)
      +
    \differential_s
    \Big(
      \big(
        \star E^{(i)}
      \big)
      \wedge
      \textstyle{
        \frac{\delta}{\delta B^{(i)}}
      }
      \vec P\big(
        \vec B
      \big)
      \Big)
    }_{ = 0 }
    \\
    \;=\
    0
    \,.
  \end{array}
\end{equation}
Here the second line uses \eqref{BicomplexStructure}, the third line uses \eqref{ExpandingOutPolynomial},
\eqref{DifferentialF}, \& \eqref{GettingTheIntegrabilityCondition}, and in the last line we use \eqref{BicomplexStructure} and cancel summands.

\newpage

\end{document}